%% file: Arxiv_main.tex
\documentclass{article}

\usepackage[english]{babel}

\usepackage[letterpaper,top=2cm,bottom=2cm,left=3cm,right=3cm,marginparwidth=1.75cm]{geometry}

\usepackage{amsmath}
\usepackage{amsfonts}
\usepackage{amsthm}
\usepackage{amssymb}
\usepackage{graphicx}
\usepackage[dvipsnames]{xcolor}
\usepackage{booktabs}
\usepackage{subcaption}
\usepackage{tikz,pgfplots}
\usetikzlibrary{positioning}
\pgfplotsset{compat=1.18}
\usepackage[hidelinks]{hyperref}
\usepackage[normalem]{ulem}

\newtheorem{theorem}{Theorem}
\newtheorem{lemma}{Lemma}
\newtheorem{definition}{Definition}
\newtheorem{corollary}{Corollary}

\newtheorem{remark}{Remark}

\DeclareMathOperator*{\argmax}{arg\,max}
\DeclareMathOperator*{\argmin}{arg\,min}

\makeatletter
\def\@fnsymbol#1{\ifcase#1\or 1\or 2\or 3\or 4\or 5\or 6\else\@ctrerr\fi}
\makeatother

\title{Equilibrium and Infeasibility: A new solution concept for games}

\author{Anne Reulke\textsuperscript{1,2}, Mikaël Touati\textsuperscript{1}, Rachid El-Azouzi\textsuperscript{2}}

\begin{document}

\maketitle

\footnotetext[1]{Orange Research, 46 Av. de la République, 92320 Châtillon, France 
        {\small (anne.reulke@orange.com, mikael.touati@orange.com)}}
\footnotetext[2]{Laboratoire d'Informatique d'Avignon (LIA), 339 chemin des Meinajariés, 84000 Avignon, France 
        {\small (rachid.elazouzi@univ-avignon.fr)}}

\begin{abstract}
\input{Abstract}
\end{abstract}

\input{Introduction_new_short}
\input{Setting}
\input{Main_Result_no_equivalence}
\input{Result_bound_on_u_and_psi}
\input{Example}
\input{Relation_to_others}
\input{Conclusion_short}

\bibliographystyle{alpha}
\bibliography{Ref}

\appendix
\input{Appendix_short}

\end{document}

%% file: Abstract.tex
Addressing infeasibility in non-cooperative games has become an important topic, as many problems across different applications face this issue. In this paper, we propose a new solution concept for generalized games with possibly infeasible individual constraints. A solution is defined as the limit of a sequence of generalized Nash equilibria induced by games with penalty terms relaxing the individual constraints. Existence is established for a broad range of games and 
we provide conditions allowing to characterize a $\psi$-penalized solution as a strategy profile maximizing every player's utility over all her penalty minimizing strategies. A variation of Divide-the-Dollar serves as an illustrative example. We further establish the compatibility with the GNE and the solution to the Nash bargaining.

%% file: Introduction_new_short.tex
\section{Introduction} 
%\IEEEPARstart{O}{riginally}
Originally introduced to model a competitive economy \cite{debreu1952social, arrow1954existence}, generalized games have also been used to study various applications (e.g., \cite{azouzi2003, LeCadre2020, braouezec2023banking}). These games are usually solved by using the \textit{generalized Nash equilibrium} (GNE) \cite{debreu1952social} defined as a strategy profile that is feasible for all players and without any feasible benefiting deviation. We refer to \cite{FacchineiKanzow2010} for an extensive survey. All standard existence theorems \cite{Dutang2013} assume that the constraint correspondences are nonempty-valued. Under this assumption, a player has always at least one feasible strategy. Moreover, it is reasonable to assume that she prefers feasibility implying that infeasible profiles are not stable and that the GNE is an appropriate solution concept. Typically, constraint correspondences take only nonempty values in case of a \textit{coupled constraint} \cite{rosen1965}, i.e. the constraint correspondences are induced by a common set for all players, but this may not hold for general \textit{individual constraints}. Nevertheless, \cite{braouezec2023} stated the importance of providing solutions to games with constraint correspondences taking also empty values, particularly those without a GNE. To address this problem, \cite{braouezec2023} considers the set of jointly feasible strategy profiles induced by the individual constraints and solves the corresponding game with coupled constraint. However, this approach relies on the assumption that there is a jointly feasible strategy profile and includes only feasible strategy profiles as solutions. It remains thus an open question if strategy profiles that are infeasible for some players could also be considered credible. \\
The algorithm proposed in \cite{FacchineiKanzow2010penaltymethods}, based on a penalization of the constraints, computes an accumulation point that is a GNE or a possibly infeasible generalized stationary point. It can also be used for games with constraint correspondences with empty values or no jointly feasible strategy profile. Nevertheless, it was only applied to games with a GNE and the possibility and consequences of defining all accumulations points as a new solution concept has not been discussed.
\\
Essentially, each player is assumed to maximize her utility under her individual constraints and if the set of feasible strategies is empty, she faces an infeasible optimization problem. Solving such problems has also received little attention. An approach \cite{Chinneck2008}, consists in detecting infeasibility and proposing a feasible model. Another approach is to design algorithms converging to a reasonable point even for infeasible problems (e.g., \cite{Byrd2010,Chiche2010, dai2023, andrews2025}).
\\
In this paper, we introduce a new solution concept for generalized games, including those with constraint correspondences taking empty values, without a jointly feasible strategy profile or without a GNE. We relax the individual constraints and incorporate them into the utility function by a penalty function $\psi$ \cite{courant1943variational}. Iteratively increasing the penalty induces a sequence of \textit{$\psi$-penalized games} and corresponding GNEs, a \textit{$\psi$-penalized solution} being defined as their limit. The formal definition is given in Section~\ref{Sec: setting} and our main results in Section~\ref{Sec: statement}.
Finally, we show an illustrative example in Section~\ref{Sec: DD_with_constraints}.
All notations used in this article can be found in Table~\ref{tab: notations}.

%% file: Setting.tex
\section{Model and Concept}
\label{Sec: setting}
We consider a N-player non-cooperative generalized game $G := (\mathcal N, (\mathcal A_i)_{i \in \mathcal N}, (u_i)_{i \in \mathcal N}, (C_i)_{i \in \mathcal N})$. The set of players is described by the set $\mathcal{N}$ and the set of strategies available to each player $i \in \mathcal N$ by $\mathcal A_i$. Following standard notation, we write $\mathcal{A}$ for the product $\prod_{j \in \mathcal N} \mathcal A_j$ of all strategy sets and $\mathcal{A}_{-i}$ for the product $\prod_{j \in \mathcal N, j \neq i} \mathcal A_j$ of all strategy sets except the one of player $i$. The set of feasible strategies for each player $i \in \mathcal N$ is given by a \textit{individual constraint correspondence} $C_i: \mathcal{A}_{-i} \to \mathcal{P}( \mathcal A_i)$ which associates to each strategy profile $a_{-i}\in \mathcal A_{-i}$ of the other players the feasible strategies $C_i(a_{-i})$ for player $i$. We emphasize that the set of feasible strategies is thus dependent on the choice of strategies of the other players. Throughout this article, we assume that for all players $i \in \mathcal N$ the strategy set $\mathcal{A}_i$ is a convex and compact subset of $\mathbb R$. Moreover, we assume that for all players the utility function $u_i$ is continuous and concave in $a_i$, i.e., for all $a_{-i} \in \mathcal A_{-i}$, $u_i(\cdot, a_{-i})$ is concave.
\\ 
Additionally, we consider a convex and compact \textit{coupled constraint} $\mathcal L \subseteq \mathbb \mathcal A$ that is shared by all player and induces for each player $i \in \mathcal N$ the \textit{coupled constraint correspondence} $L_i: \mathcal A_{-i} \to \mathcal P(\mathcal A_i)$ such that $L_i(a_{-i}) := \big\{ a_i \in \mathcal A_i ~|~ (a_i, a_{-i}) \in \mathcal L \big\}$. 
Without loss of generality, we assume that $\bigcup_{a_{-i} \in \mathcal A_{-i}} L_i(a_{-i}) = \mathcal A_i$ (see \cite{rosen1965}) and that  for all players $i \in \mathcal N$ and strategy profiles $a_{-i} \in \mathcal A_{-i}$, $C_i(a_{-i}) \subseteq L_i(a_{-i})$. 
\\
The aim of each player $i \in \mathcal N$ is to maximize her utility function $u_i: \mathcal A \to \mathbb R$ while respecting her constraints, i.e. to chose a strategy $a_i \in \mathcal A_i$, given the other players' strategies $a_{-i}$, that solves the maximization problem 
\begin{equation}
\label{Equ: individual_max_u}
    \begin{aligned}
        &\max_{a_i \in \mathcal A_i} u_i(a_i, a_{-i}) \\
        &\text{s.t. } a_i \in C_i(a_{-i})
    \end{aligned}
\end{equation}
We emphasize that we do not assume that the individual constraints are nonempty-valued and the Problem~\eqref{Equ: individual_max_u} may be infeasible for some strategy profile $a_{-i} \in \mathcal A_{-i}$. We propose therefore to relax the individual constraints and incorporate a penalty function into the utility. 

\subsection{\texorpdfstring{$\psi$}{psi}-penalized game}
\label{sec: penalized_game}
We introduce the N-player generalized \textit{$\psi$-penalized game} $G_\rho := (\mathcal N, (\mathcal A_i)_{i \in \mathcal N}, (u_i - \rho \psi_i)_{i \in \mathcal N}, (L_i)_{i \in \mathcal N})$, an auxiliary parametrized game with coupled constraints and penalized utilities induced by game $G$. As before, the set of players is $\mathcal N$, the set of strategies of player $i \in \mathcal N$ is $\mathcal A_i$ but the only constraints are given by the coupled constraint correspondences $L_i$. 
The individual constraints are relaxed and replaced by a penalty term that consists of a penalty parameter $\rho > 0$ and a penalty function $\psi_i: \mathcal A \to \mathbb R_{\geq 0}$ such that for all $i \in \mathcal N$ and $a \in \mathcal A$
\begin{equation}
\label{equ: conditions_on_psi}
    a_i \in  \overline{C_i(a_{-i})} \Leftrightarrow \psi_i(a) = 0
\end{equation}

Furthermore, we assume that $\psi_i$ is continuous and convex in $a_i$, i.e., for all $a_{-i} \in \mathcal A_{-i}$, $\psi_i(\cdot, a_{-i})$ is convex. The utility function of player $i$ in $G_\rho$ is defined as the sum of the original utility and a penalty term, i.e. $u_i - \rho \psi_i$, such that, for any $a_{-i}\in \mathcal A_{-i}$, player $i$ solves the maximization problem 
\begin{equation*}
    \begin{aligned}
        &\max_{a_i \in \mathcal A_i} u_i(a_i, a_{-i}) - \rho \psi_i(a_i, a_{-i})\\
        &\text{s.t. } a_i \in L_i(a_{-i})
    \end{aligned}
\end{equation*}
We only relaxed the individual constraints $C_i$, while the coupled constraint $\mathcal L$ is left unchanged. As discussed earlier, games with only a coupled constraint $\mathcal L$ have been extensively studied \cite{rosen1965, Dutang2013} and do not bear the difficulty of infeasibility. We follow hereby the idea of a partial penalization presented in \cite{FacchineiLampariello2011} where $\mathcal L$ is chosen as the product $\mathcal A$.

\subsection{\texorpdfstring{$\psi$}{psi}-penalized solution concept}
Let $(\rho_n)_{n \in \mathbb N} \subseteq \mathbb R_{>0}$ be any sequence such that $\rho_n \to \infty$ and $(G_{\rho_n})_{n \in \mathbb N}$ the corresponding sequence of $\psi$-penalized games (as defined in Section \ref{sec: penalized_game}). In the following, we show that our assumptions imply that each game $G_{\rho_n}$ has at least one GNE $a_n^*\in\mathcal A$ (see Theorem \ref{Th: Existence}) and define a \textit{$\psi$-penalized solution} $a^*$ of game $G$ as an accumulation point (see Definition 1.1, p.~209 in \cite{Dugundji1978}) of any such sequence $(a_n^*)_{n \in \mathbb N}$.
We now give the formal definition of a $\psi$-penalized solution of game $G$.
\begin{definition}
\label{Def: penalized_solution}
    A strategy profile $a^* \in \mathcal A$ is called a \textit{$\psi$-penalized solution} of game $G$ if there is a sequence $(a_n^*)_{n \in \mathbb N}$ such that, for all $n \in \mathbb N$, $a_n^*$ is a GNE of the $\psi$-penalized game $G_{\rho_n}$ and $a^*$ is an accumulation point of $(a_n^*)_{n \in \mathbb N}$.
\end{definition}
A $\psi$-penalized solution may be infeasible for certain players and does, in contrast to the GNE, not demand feasibility for all players. It provides thus a solution to generalized games with constraint correspondences taking empty values or defining disjoint feasible sets. We show that a $\psi$-penalized solution exists under the initially stated assumptions. 
\begin{theorem}
\label{Th: Existence}
    There exists a $\psi$-penalized solution of game~$G$.
\end{theorem}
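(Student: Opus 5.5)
The proof has two steps. First, for every fixed $\rho>0$ the $\psi$-penalized game $G_\rho$ has a GNE. Second, any sequence of such equilibria has an accumulation point, by compactness of $\mathcal A$.

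For the first step we invoke Rosen's existence theorem for games with a coupled constraint \cite{rosen1965}, or equivalently the standard Ichiishi/Arrow--Debreu existence result \cite{Dutang2013}. We check its hypotheses for $G_\rho$:

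\begin{itemize}
\item Each $\mathcal A_i$ is a nonempty, convex, compact subset of $\mathbb R$. Hence $\mathcal A$ is convex and compact.
\item The coupled constraint set $\mathcal L\subseteq\mathcal A$ is nonempty, convex and compact. Nonemptiness follows from the normalization $\bigcup_{a_{-i}} L_i(a_{-i})=\mathcal A_i$ together with $\mathcal A_i\neq\emptyset$.
\item The penalized payoff $u_i-\rho\psi_i$ is continuous on $\mathcal A$, because $u_i$ and $\psi_i$ are continuous.
\item It is concave in $a_i$, because $u_i(\cdot,a_{-i})$ is concave, $\psi_i(\cdot,a_{-i})$ is convex, and $\rho>0$.
\end{itemize}

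Rosen's theorem then yields a profile $a^*_\rho\in\mathcal L$ such that, for each $i$, $a^*_{\rho,i}$ maximizes $u_i(\cdot,a^*_{\rho,-i})-\rho\psi_i(\cdot,a^*_{\rho,-i})$ over $L_i(a^*_{\rho,-i})$. This is exactly a GNE of $G_\rho$. The key point is that the only constraints left in $G_\rho$ come from the common set $\mathcal L$. So the constraint correspondences are nonempty-valued at every profile in $\mathcal L$, and the infeasibility issues of $C_i$ never enter.

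If I wanted to avoid citing Rosen, I would argue directly through Kakutani's theorem. Consider the map $B:\mathcal L\rightrightarrows\mathcal L$, $B(a)=\prod_i \argmax_{b_i\in L_i(a_{-i})}\big(u_i-\rho\psi_i\big)(b_i,a_{-i})$. Its values are nonempty (by compactness and continuity) and convex (by concavity). Closedness of its graph follows from Berge's maximum theorem, which requires $L_i$ to be continuous, i.e.\ both upper and lower hemicontinuous. Upper hemicontinuity comes from closedness of $\mathcal L$. Lower hemicontinuity of a section of a convex compact set is not automatic, and this is the main obstacle. The cleaner route, and the one I would take, is Rosen's device. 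Define the Nikaido--Isoda-type aggregate $\Phi(a,b)=\sum_i (u_i-\rho\psi_i)(b_i,a_{-i})$. It is continuous on $\mathcal L\times\mathcal L$ and concave in $b$. The best-reply map $a\mapsto\argmax_{b\in\mathcal L}\Phi(a,b)$ has a fixed feasible set $\mathcal L$, so Berge applies with a constant constraint set. Kakutani gives a fixed point $a^*$ with $a^*\in\argmax_{b\in\mathcal L}\Phi(a^*,b)$. Choosing $b=(b_i,a^*_{-i})$ with $b_i\in L_i(a^*_{-i})$ shows that $a^*$ is a GNE of $G_\rho$.

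For the second step, fix any sequence $\rho_n\to\infty$ and pick a GNE $a^*_n$ of $G_{\rho_n}$ for each $n$; this needs countable choice only. The sequence $(a^*_n)$ lies in the compact set $\mathcal L\subseteq\mathcal A\subset\mathbb R^N$. By Bolzano--Weierstrass it has a convergent subsequence, whose limit $a^*$ is an accumulation point of $(a^*_n)$. By Definition~\ref{Def: penalized_solution}, $a^*$ is a $\psi$-penalized solution of $G$. No property of $\psi$ beyond continuity and convexity is needed for existence; the characterization of the limit is deferred to later results.
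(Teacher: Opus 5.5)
Your proposal is correct and follows the paper's proof: Rosen's existence theorem for concave games with a convex compact coupled constraint $\mathcal L$ yields a GNE of each $G_{\rho_n}$, since $u_i-\rho_n\psi_i$ is continuous and concave in $a_i$, and Bolzano--Weierstrass on the compact set $\mathcal L$ then supplies the accumulation point. Your Nikaido--Isoda/Kakutani sketch re-derives Rosen's Theorem 1 rather than offering a different route, and it is also correct, including the observation that it avoids needing lower semicontinuity of $L_i$.
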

\begin{proof}
    For all players $i \in \mathcal N$ and any $\rho\in \mathbb R_{>0}$, the utility function $u_i - \rho \psi_i$ of the $\psi$-penalized game $G_\rho$ is continuous as the composition of two continuous functions $u_i$ and $\psi_i$. Moreover, for all $a_{-i} \in \mathcal{A}_{-i}$, $u_i(\cdot, a_{-i}) - \rho \psi_i(\cdot, a_{-i})$ is concave as the sum of the two concave functions $u_i(\cdot, a_{-i})$ and $-\rho \psi_i(\cdot, a_{-i})$. Hence, for all $n \in \mathbb N$, the $\psi$-penalized game $G_{\rho_n}$ is a concave game under the coupled convex and compact constraint $\mathcal L$ and has a GNE $a^*_n$ (see Theorem 1 in \cite{rosen1965}). We can thus construct a sequence $(a_n^*)_{n \in \mathbb N}$ such that $a_n^*$ is a GNE of $\psi$ -penalized game $G_{\rho_n}$. Since $\mathcal L$ is compact, we know that $(a_n^*)_{n \in \mathbb N}$ has an accumulation point (Bolzano–Weierstrass theorem, see 3.4.8 in \cite{Bartle2000}).
\end{proof}
We remark that, by definition, the $\psi$-penalized solution depends on $\psi := (\psi_i)_{i \in \mathcal N}$ and the associated sequence of $\psi$-penalized games $(G_{\rho_n})_{n \in \mathbb N}$. The dependence on exogenous parameters can be found in other solution concepts, as the \textit{normalized equilibrium} on the chosen Lagrange multipliers \cite{rosen1965} or the \textit{H-essential equilibrium} on the class of penalty functions \cite{vanDamme1991}. The properties of $\psi$-penalized solutions with respect to $\psi$, beyond the results shown in this paper, remain an open question (see Section \ref{Sec: Future_works}).

%% file: Main_Result_no_equivalence.tex
\section{Main Results}
\label{Sec: statement}
In this section, we show that a $\psi$-penalized solution can be characterized without relying on the sequence of $\psi$-penalized games $(G_{\rho_n})_{n \in \mathbb N}$. First, in Section~\ref{Sec: Penalized_is_Solution}, we provide conditions implying that a $\psi$-penalized solution $a^* \in \mathcal A$ solves for each player $i \in \mathcal N$
\begin{equation}
\label{Prob: Max_over_min}
    \begin{aligned}
        &\max_{a_i} u_i(a_i, a_{-i}^*) \\
        &~\text{s.t. } a_i \in \argmin_{b_i \in L_i(a_{-i}^*)} \psi_i(b_i, a_{-i}^*)
    \end{aligned}
\end{equation}
Thus, at a $\psi$-penalized solution a benefiting deviation in $u_i$ would result in an infeasible strategy with respect to the coupled constraint $\mathcal L$ or a higher penalty $\psi_i$. 
\\
In Section~\ref{Sec: counterexample}, we give a counterexample showing that in general it is not sufficient for a strategy profile to solve Problem~\eqref{Prob: Max_over_min} for each player to be a $\psi$-penalized solution. Finally, in Section~\ref{Sec: sufficient_and_exact_penalization}, we provide sufficient (but not necessary) conditions.
\\
In the following, we have used the notion of \textit{lower semi-continuous} (l.s.c.) 
defined in \cite{berge1963}. We recall the definition in Appendix~\ref{Sec: Appendix}, where we also present a preliminary result for the following proof. 

\subsection{\texorpdfstring{$\psi$}{psi}-penalized solutions solving Problem \texorpdfstring{\eqref{Prob: Max_over_min}}{(3)}}
\label{Sec: Penalized_is_Solution}
First, we show that a $\psi$-penalized solution of game $G$ minimizes the penalty function $\psi$ for every player in response to the strategies of the other players.
\begin{lemma}
    \label{Lem: solution_is_in_argmin_psi}
    Let $a^* \in \mathcal A$ be a $\psi$-penalized solution of game $G$. If for some player $i \in \mathcal N$ the coupled constraint correspondence $L_i$ is l.s.c., then $a_i^* \in \argmin_{b_i \in L_i(a_{-i}^*)} \psi_i(b_i, a_{-i}^*)$.
\end{lemma}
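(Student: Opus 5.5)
We argue by contradiction along the defining subsequence. By Definition~\ref{Def: penalized_solution}, there are GNEs $a_n^*$ of $G_{\rho_n}$ and a subsequence, still indexed by $n$, with $a_n^* \to a^*$. Since $\mathcal L$ is closed and each $a_n^* \in \mathcal L$, we have $a^* \in \mathcal L$, so $a_i^* \in L_i(a_{-i}^*)$. Suppose $a_i^*$ does not minimize $\psi_i(\cdot, a_{-i}^*)$ over $L_i(a_{-i}^*)$. Then there is $b_i \in L_i(a_{-i}^*)$ and $\delta > 0$ with
\[
\psi_i(b_i, a_{-i}^*) \le \psi_i(a_i^*, a_{-i}^*) - 2\delta .
\]

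The key step uses the lower semi-continuity of $L_i$ at $a_{-i}^*$, in the sense of \cite{berge1963} recalled in the Appendix. Since $a_{-i,n}^* \to a_{-i}^*$ and $b_i \in L_i(a_{-i}^*)$, it provides a sequence $b_{i,n} \in L_i(a_{-i,n}^*)$ with $b_{i,n} \to b_i$. I expect the preliminary result in the Appendix to be precisely this sequential characterization, and I would invoke it. Each $b_{i,n}$ is then a feasible deviation for player $i$ in $G_{\rho_n}$ against $a_{-i,n}^*$. The GNE property gives
\[
u_i(a_n^*) - \rho_n \psi_i(a_n^*) \ge u_i(b_{i,n}, a_{-i,n}^*) - \rho_n \psi_i(b_{i,n}, a_{-i,n}^*),
\]
which rearranges to
\[
\psi_i(a_n^*) - \psi_i(b_{i,n}, a_{-i,n}^*) \le \frac{u_i(a_n^*) - u_i(b_{i,n}, a_{-i,n}^*)}{\rho_n}.
\]

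Next we pass to the limit. Since $u_i$ is continuous on the compact set $\mathcal A$, it is bounded, say $|u_i| \le M$. The right-hand side is therefore at most $2M/\rho_n \to 0$. By joint continuity of $\psi_i$, the left-hand side converges to $\psi_i(a^*) - \psi_i(b_i, a_{-i}^*) \ge 2\delta$. For large $n$ this gives $\delta \le 2M/\rho_n$, a contradiction.

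The main obstacle is the step producing the approximating deviations $b_{i,n}$. Everything hinges on lower semi-continuity of $L_i$, and one has to make sure the Appendix's definition yields the sequential form along the specific sequence $a_{-i,n}^*$. If the definition is instead the open-set version, I would derive the sequential form directly. For each $k$, the open ball of radius $1/k$ around $b_i$ meets $L_i(a_{-i})$ for all $a_{-i}$ in some neighborhood of $a_{-i}^*$. A diagonal choice then gives points $b_{i,n}$ converging to $b_i$. The remaining steps are routine continuity and boundedness arguments.
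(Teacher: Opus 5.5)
Your proof is correct and follows essentially the same route as the paper's. Both argue by contradiction along a convergent subsequence of GNEs, use lower semi-continuity of $L_i$ (through the sequential characterization in Lemma~\ref{Lem: Hogan}) to obtain feasible deviations $b_{i,n}\in L_i(a^*_{-i,n})$ with $b_{i,n}\to b_i$, divide the GNE inequality by $\rho_n$, and use boundedness of $u_i$ to force the penalty gap to zero; you additionally check $a_i^*\in L_i(a_{-i}^*)$ via closedness of $\mathcal L$, a step the paper leaves implicit.
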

\begin{proof}
    By contradiction, we assume there is a strategy $a_i \in \mathcal A_i$ such that $a_i \in \mathcal L_i(a_{-i}^*)$ and 
    \begin{equation}
    \label{equ_in_proof: in_argmin_supposition}
        \psi_i(a_i^*, a_{-i}^*) - \psi_i(a_i, a_{-i}^*) > 0
    \end{equation}
    $a^*$ is a $\psi$-penalized solution of game $G$, thus by definition there is  a sequence $(a_{n_k}^*)_{k \in \mathbb N} \subset \mathcal A$ such that, for all $k \in \mathbb N$, $a_{n_k}^*$ is a GNE of $G_{n_k} := G_{\rho_{n_k}}$ and converging to $a^*$. \\
    Moreover, since $L_i$ is l.s.c., and thus l.s.c. at $a_{-i}^*$, there is a sequence $(a_{n_k,i})_{k \in \mathbb N}$ such that $a_{n_k,i} \in L_i(a_{n_k,-i}^*)$ and $a_{n_k,i} \to a_i$ as $k \to \infty$ (Lemma~\ref{Lem: Hogan}). By our initial assumption \eqref{equ_in_proof: in_argmin_supposition} and the continuity of $\psi_i$, we deduce
    \begin{align*}
        0 < &\lim_{k \to \infty} \psi_i(a_{n_k,i}^*, a_{n_k,-i}^*)  - \psi_i(a_{n_k,i}, a_{n_k,-i}^*) \\
        = &\lim_{k \to \infty} \frac{1}{\rho_{n_k}} \Big[ \big(u_i(a_{n_k,i}, a_{n_k,-i}^*) - \rho_{n_k} \psi_i(a_{n_k,i}, a_{n_k,-i}^*) \big) - \big(u_i(a_{n_k,i}^*, a_{n_k,-i}^*) - \rho_{n_k} \psi_i(a_{n_k,i}^*, a_{n_k,-i}^*) \big) \Big] \\
        &+ \frac{1}{\rho_{n_k}} \Big[ u_i(a_{n_k,i}^*, a_{n_k,-i}^*) - u_i(a_{n_k,i}, a_{n_k,-i}^*)  \Big]
    \end{align*}
    Moreover, since $a_{n_k}^*$ is a GNE of $G_{n_k}$ for all $k \in \mathbb N$, we have
    \begin{align*}
        \big(u_i(a_{n_k,i}, a_{n_k,-i}^*) - \rho_{n_k} \psi_i(a_{n_k,i}, a_{n,-i}^*)\big)
        - \big(u_i(a_{n_k,i}^*, a_{n_k,-i}^*) - \rho_{n_k} \psi_i(a_{n_k,i}^*, a_{n_k,-i}^*) \big) \leq 0
    \end{align*}
    Furthermore, $u_i$ is continuous and $\mathcal L$ is compact, thus the difference $u_i(a_{n_k,i}^*, a_{n_k,-i}^*) - u_i(a_{n_k,i}, a_{n_k,-i}^*)$ is bounded by some constant $C>0$, i.e. for all $k \in \mathbb N$:
    \begin{equation*}
        |u_i(a_{n_k,i}^*, a_{n_k,-i}^*) - u_i(a_{n_k,i}, a_{n_k,-i}^*)| \leq C
    \end{equation*}
    We conclude
    \begin{align*}
        0 &< \lim_{k \to \infty} \frac{1}{\rho_{n_k}} \Big[ \big(u_i(a_{n_k,i}, a_{n_k,-i}^*) - \rho_{n_k} \psi_i(a_{n_k,i}, a_{n_k,-i}^*)\big) - \big(u_i(a_{n_k,i}^*, a_{n_k,-i}^*) - \rho_{n_k} \psi_i(a_{n_k,i}^*, a_{n_k,-i}^*) \big) \Big] \\
        &\quad \quad + \frac{1}{\rho_{n_k}} \Big[ u_i(a_{n_k,i}^*, a_{n_k,-i}^*) - u_i(a_{n_k,i}, a_{n_k,-i}^*)  \Big] \\
        &\leq  0 + \lim_{k \to \infty} \frac{1}{\rho_{n_k}} \cdot C \\
        &= 0
    \end{align*}
    This is a contradiction.
\end{proof}
Now, we show that under additional conditions on the regularity of the individual constraints, a $\psi$-penalized solution of game $G$ maximizes the original utility function among all minimizers of the penalty function.
\begin{theorem}
\label{Th: penalized_solves_problem}
    Assume that for player $i \in \mathcal N$ the following conditions hold
    \begin{itemize}
        \item[(i)] $L_i$ is l.s.c.,
        \item[(ii)] for all $a_{-i} \in \mathcal{A}_{-i}$, $C_i$ is l.s.c. in $a_{-i}$ or $C_i(a_{-i})$ is a singleton,
        \item[(iii)] for all $a_{-i} \in \mathcal A_{-i}$, $\psi_i(\cdot, a_{-i})$ is strictly convex on $L_i(a_{-i}) \setminus C_i(a_{-i})$.
    \end{itemize}
    If $a^*$ is $\psi$-penalized solution of game $G$ then $a^*_i$ solves Problem~\eqref{Prob: Max_over_min} for player $i$. 
\end{theorem}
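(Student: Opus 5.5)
By Lemma~\ref{Lem: solution_is_in_argmin_psi} and condition (i), $a_i^*$ lies in $M_i := \argmin_{b_i \in L_i(a_{-i}^*)} \psi_i(b_i, a_{-i}^*)$. It then remains to show $u_i(a_i^*, a_{-i}^*) \geq u_i(a_i, a_{-i}^*)$ for every $a_i \in M_i$. I will split into two cases according to the minimal value $m := \min_{b_i \in L_i(a_{-i}^*)} \psi_i(b_i, a_{-i}^*)$.

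\emph{Case $m>0$.} Then $M_i \cap \overline{C_i(a_{-i}^*)} = \emptyset$ by \eqref{equ: conditions_on_psi}. I will argue that $M_i$ is a singleton, so there is nothing to prove. If $b \neq b'$ both lay in $M_i$, convexity of $L_i(a_{-i}^*)$ (a section of the convex set $\mathcal L$) and convexity of $\psi_i$ would give that the whole segment $[b,b']$ lies in $M_i$. Hence the segment is contained in $L_i(a_{-i}^*) \setminus C_i(a_{-i}^*)$, and $\psi_i$ is constant on it. This contradicts the strict convexity in (iii). Thus $M_i = \{a_i^*\}$.

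\emph{Case $m=0$.} Then $M_i = \overline{C_i(a_{-i}^*)} \cap L_i(a_{-i}^*)$, which equals $\overline{C_i(a_{-i}^*)}$ because $C_i \subseteq L_i$ and $L_i(a_{-i}^*)$ is closed. If $C_i(a_{-i}^*)$ is a singleton, $M_i$ is again a singleton and we are done. Otherwise $C_i$ is l.s.c. at $a_{-i}^*$. Fix $a_i \in C_i(a_{-i}^*)$; by continuity of $u_i$ it suffices to treat such $a_i$ rather than points of the closure. Take the subsequence of GNEs $a_{n_k}^* \to a^*$. By Lemma~\ref{Lem: Hogan} applied to $C_i$, choose $a_{n_k,i} \in C_i(a_{n_k,-i}^*) \subseteq L_i(a_{n_k,-i}^*)$ with $a_{n_k,i} \to a_i$. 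Since $\psi_i(a_{n_k,i}, a_{n_k,-i}^*) = 0$, the GNE property gives
\begin{equation*}
u_i(a_{n_k,i}, a_{n_k,-i}^*) \leq u_i(a_{n_k}^*) - \rho_{n_k}\psi_i(a_{n_k}^*) \leq u_i(a_{n_k}^*).
\end{equation*}
Passing to the limit with continuity of $u_i$ yields $u_i(a_i, a_{-i}^*) \leq u_i(a^*)$, as required.

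The main obstacle is the case $m>0$: one has to make precise that strict convexity ``on $L_i(a_{-i}^*) \setminus C_i(a_{-i}^*)$'' applies to the segment. That set need not be convex, but the segment lies entirely inside it, so strict convexity along the segment is exactly what is needed. A secondary point is justifying that the minimizers avoid the closure of $C_i$ when $m>0$, which follows directly from \eqref{equ: conditions_on_psi}.
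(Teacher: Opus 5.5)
Your proof is correct and follows the paper's proof essentially step for step. Your case $m>0$ is exactly the paper's case $C_i(a_{-i}^*)=\emptyset$, because any point of $C_i(a_{-i}^*)\subseteq L_i(a_{-i}^*)$ has zero penalty; so there $L_i(a_{-i}^*)\setminus C_i(a_{-i}^*)=L_i(a_{-i}^*)$ is convex, the ``obstacle'' you flag never arises, and your only other deviation is harmless: in the l.s.c.\ subcase you approximate points of $C_i(a_{-i}^*)$ and extend to $\overline{C_i(a_{-i}^*)}$ by continuity of $u_i$, whereas the paper uses Lemma~\ref{Lem: compactification_is_lsc} to approximate points of the closure directly.
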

\begin{proof}
    We distinguish two cases, $C_i(a_{-i}^*) = \emptyset$ (i.e. player $i$ has no feasible strategy at $a_{-i}^*$) and $C_i(a_{-i}^*) \neq \emptyset$ (i.e. player $i$ has a feasible strategy at $a_{-i}^*$).
    \\
    \textit{First Case.} Let $C_i(a_{-i}^*) = \emptyset$. By Assumption (iii), $\psi_i(\cdot, a_{-i}^*)$ is strictly convex on $L_i(a_{-i}^*)$ implying that $\argmin_{L_i(a_{-i}^*)} \psi_i(\cdot, a_{-i}^*)$ is a singleton. 
    By Lemma~\ref{Lem: solution_is_in_argmin_psi}, $a_i^*$ is this unique solution implying that it also solves Problem~\eqref{Prob: Max_over_min} as unique element in the feasible set. \\
    \textit{Second Case.} Let $C_i(a_{-i}^*) \neq \emptyset$. By construction of $\psi_i$ (see \eqref{equ: conditions_on_psi}) and Lemma~\ref{Lem: solution_is_in_argmin_psi}, $ a_i^* \in \argmin_{L_i(a_{-i}^*)} \psi_i(\cdot, a_{-i}^*) = \overline{C_i(a_{-i}^*)}$, thus we must show that for all $b_i \in \overline{C_i(a_{-i}^*)}$
    %, $u_i(a_i^*, a_{-i}^*) \geq u_i(b_i, a_{-i}^*)$. \\
    \begin{equation*}
        u_i(a_i^*, a_{-i}^*) \geq u_i(b_i, a_{-i}^*)
    \end{equation*}
    We consider the two cases induced by Assumption (ii).\\
    If $C_i(a_{-i}^*)$ is a singleton, then $\overline{C_i(a_{-i}^*)} = C_i(a_{-i}^*)$ and, as before, $a_{i}^*$ is the unique solution to Problem~\eqref{Prob: Max_over_min}.\\
    If $C_i$ is l.s.c. at $a_{-i}^*$ then the correspondence $\overline{C}:\mathcal{A}_{-i}\rightarrow \mathcal P(\mathcal A_i)$ s.t. for all $a_{-i}\in\mathcal{A}_{-i}$ $\overline{C}(a_{-i}) := \overline{C_i(a_{-i})}$ is also l.s.c. (see Lemma \ref{Lem: compactification_is_lsc} in Appendix~\ref{Sec: Appendix}). \\    
    Besides, by definition of the $\psi$-penalized solution, there is a sequence $(a_{n_k}^*)_{k \in \mathbb N}$ converging to $a^*$ as $k \to \infty$ such that, for all $k \in \mathbb N$, $a_{n_k}^*$ is a GNE of $G_{n_k}:= G_{\rho_{n_k}}$. Furthermore, by Lemma~\ref{Lem: Hogan}, for all $b_i\in \overline{C_i}(a_{-i}^*) = \argmin_{L_i(a_{-i}^*)} \psi_i(\cdot, a_{-i}^*)$ there is a sequence $(b_{n_k,i})_{n \in \mathbb N}$ such that $b_{n_k,i} \in \overline{C_i}(a_{n_k,-i}^*)$ and $b_{n_k,i} \to b_i$ as $k \to \infty$.
    We have
    \begin{align*}
        u_i\big(b_{n_k,i}, a_{n_k,-i}^*\big) &= u_i\big(b_{n_k,i}, a_{n_k,-i}^*\big) - \rho_{n_k} \psi_i\big(b_{n_k,i}, a_{n_k,-i}^*\big) \\ 
        &\leq u_i\big(a^*_{n_k,i}, a_{n_k,-i}^*\big) - \rho_{n_k} \psi_i\big(a^*_{n_k,i}, a_{n_k,-i}^*\big) \\
        &\leq u_i\big(a^*_{n_k,i}, a_{n_k,-i}^*\big)
    \end{align*}
    The first equality follows from the penalty vanishing at $(b_{n_k,i},a_{n_k,-i}^*)$ as $b_{n_k,i} \in \overline{C_i}(a_{n_k,-i}^*)$. The second line holds since $a^*_{n_k}$ is a GNE of $G_{n_k}$ and $b_{n_k,i} \in \overline{C_i}(a_{-i,n}^*) \subset L_i(a_{-i,n}^*)$. The third one follows by nonnegativity of $\psi_i$ and $\rho_n$.
    Finally, $u_i$ is continuous and the sequences $(a_{n_k}^*)_{k \in \mathbb N}$ and $(b_{n_k,i})_{n \in \mathbb N}$ are convergent, thus by passing to the limit we have $u_i\big(b_{i}, a_{-i}^*\big) \leq  u_i\big(a^*_{i}, a_{-i}^*\big)$, implying that $a_i^*$ solves Problem~\eqref{Prob: Max_over_min}.
\end{proof}
\begin{remark}
    It is standard in the literature to assume that the constraint correspondences are l.s.c. and nonempty valued, i.e. for all $i\in\mathcal N$ and for all $a_{-i}\in\mathcal{A}_{-i}$, $C_i(a_{-i})\neq\emptyset$. While in Theorem~\ref{Th: penalized_solves_problem}, we also assume lower semi-continuity by Assumption (i), Assumption (ii) does not imply that the individual correspondences take nonempty values. In fact, by definition, a correspondence is l.s.c. in points with empty values but it is not at a transition point from empty to nonempty values, i.e. for any topological spaces $X, Y$ a correspondence $C: X \to Y$ is not l.s.c. in a point $x_0 \in X$ such that $C(x_0) \neq \emptyset$ but for all neighborhoods $U(x_0)$ of $x_0$ there is a $x \in U(x_0)$ such that $C(x) = \emptyset$. 
    To address this problem, by Assumption (ii), we demand that the correspondence associates a singleton to such transition points which guarantees, roughly spoken, a "smooth" transition from empty to nonempty values. Moreover, we remark that Assumption (ii) was imposed to guarantee the uniqueness of the minimizer of $\psi_i(\cdot, a_{-i})$ on $L_i(a_{-i}) \setminus C_i(a_{-i})$ and could therefore be replaced by assuming directly the uniqueness. \\
\end{remark}
\subsection{Solutions of Problem\texorpdfstring{~\eqref{Prob: Max_over_min}}{(3)} may not be \texorpdfstring{$\psi$}{psi}-penalized solutions: a counterexample}
\label{Sec: counterexample}
\input{Counter_example_without_trains}

%% file: Counter_example_without_trains.tex
In this section, we provide a counterexample showing that, given penalty function $\psi$, it is not sufficient for a point to solve Problem~\eqref{Prob: Max_over_min} for every player to be a $\psi$-penalized solution. Therefore, we consider a $2$-player game $G^\text{ce}$ with strategy sets $\mathcal A_i = [0,1]$ and utilities $u_i(a_i, a_{-i}) = - (1 - a_i)^2$. The coupled constraint is given by $\mathcal L = \{(a_1, a_2) \in [0,1]^2 ~|~ a_1 + a_2 \leq 1\}$ and the individual constraints of the players by
\begin{align*}
    C_1(a_2) &= \begin{cases}
        \{a_2\} &a_2 \in [0, 0.5] \\
        \emptyset &a_2 \in (0.5, 1]
    \end{cases} \\
\intertext{and}
    C_2(a_1) &= \begin{cases}
        [a_1, \min(3a_1, 1 - a_1)] &a_1 \in [0, 0.5] \\
        \emptyset &a_1 \in (0.5, 1]
    \end{cases}
\end{align*}
\\
We consider the penalty functions 
$\psi_1(a_1, a_2) = (a_1-a_2)^2$ and $\psi_2(a_1, a_2) = d((a_1,a_2), \Gamma(C_2))^2$ where $d$ is the distance to a set defined in Table~\ref{tab: notations}. 
The set of GNEs of the $\psi$-penalized game $G^\text{ce}_{\rho_n}$ is
\begin{equation*}
    \bigg\{ (a_1, 1-a_1)\in [0,1]^2 ~|~ \frac{\rho_n}{4\rho_n + 10} \leq a_1 \leq \frac{\rho_n +1}{2 \rho_n + 1} \bigg\} 
\end{equation*}
Thus, the set of $\psi$-penalized solutions of game $G^\text{ce}$ (see Fig.~\ref{fig: Counterexample}) is
\begin{equation*}
    \bigg\{ (a_1, 1-a_1)\in [0,1]^2 ~|~ \frac{1}{4} \leq a_1 \leq \frac{1}{2} \bigg\} 
\end{equation*}
The point $(0,0)$ solves Problem~\eqref{Prob: Max_over_min} for both players but is not a $\psi$-penalized solution of $G^\text{ce}$ implying that solving Problem~\eqref{Prob: Max_over_min} for all players is not sufficient without further assumptions.
Moreover, $(0,0)$ is a GNE of $G^\text{ce}$, showing that a GNE of the original game may not be a $\psi$-penalized solution.

\begin{figure}[ht]
\centering
    \includegraphics[width=0.4\columnwidth]{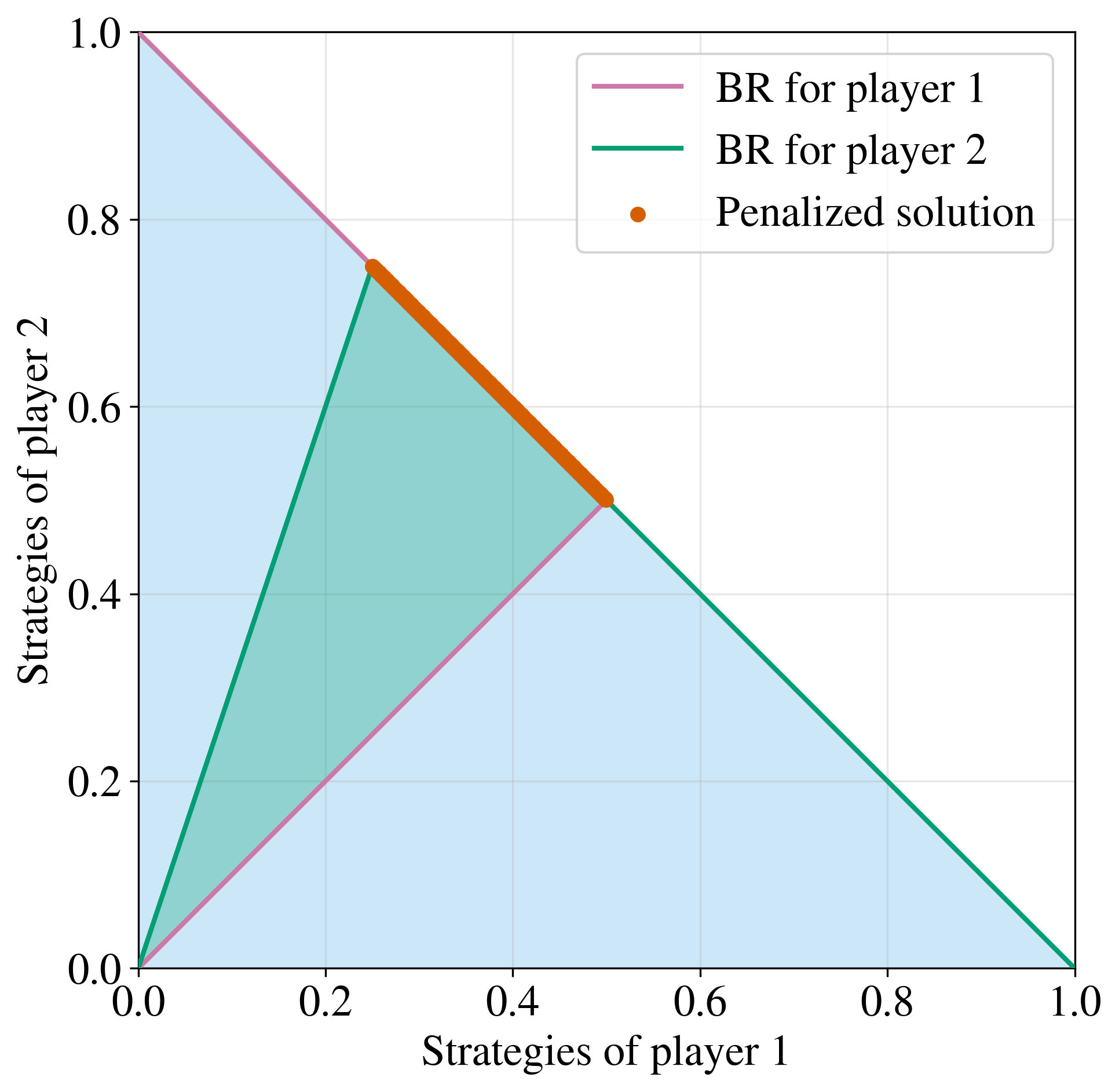}
\caption{$\psi$-penalized solutions (red) and best response correspondences for player 1 (purple) and player 2 (green) under the coupled constraint $\mathcal L$ (blue) and their respective individual constraints for game $G^\text{ce}$.}
\label{fig: Counterexample}
\end{figure}

%% file: Result_bound_on_u_and_psi.tex
\subsection{Sufficient conditions and exact penalization}
\label{Sec: sufficient_and_exact_penalization}
In this section, we give conditions under which every strategy profile solving Problem~\eqref{Prob: Max_over_min} for all players is a $\psi$-penalized solution. 
\begin{theorem}
\label{Th: exact_penalization}
    Let $a^* \in L$. Assume there are constants $\alpha_u, \alpha_\psi>0$ such that for all players $i \in \mathcal N$ and strategies $a_i \in L_i(a_{-i}^*)$, $u_i(\cdot, a_{-i}^*)$ is $\alpha_u$-Lipschitz continuous at $a_i^*$, i.e.
    \begin{equation}\label{condition_on_u}
        |u_i(a_i, a_{-i}^*) - u_i(a^*)| \leq \alpha_u |a_i - a_i^*|
    \end{equation}
    and
    \begin{equation}\label{condition_on_psi}
        \begin{aligned}
            | \psi_i(a_i, a_{-i}^*) - \psi_i(a^*) | \geq \alpha_{\psi} |a_i - a_i^*| 
            \quad \text{or} \quad \psi(a^*) = \psi(a_i, a_{-i}^*)
        \end{aligned}
    \end{equation}
    If $a^*$ solves Problem~\eqref{Prob: Max_over_min} for all players, then $a^*$ is a $\psi$-penalized solution. 
\end{theorem}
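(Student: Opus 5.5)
The plan is an exact-penalization argument: show that $a^*$ is itself a GNE of $G_\rho$ for every $\rho$ larger than a threshold $\bar\rho := \alpha_u/\alpha_\psi$. Given any sequence $\rho_n \to \infty$, we then take $a_n^* := a^*$ for all $n$ with $\rho_n > \bar\rho$. For the finitely many remaining indices we take any GNE of $G_{\rho_n}$, which exists by the proof of Theorem~\ref{Th: Existence}. The resulting sequence is eventually constant equal to $a^*$, so $a^*$ is an accumulation point, and by Definition~\ref{Def: penalized_solution} it is a $\psi$-penalized solution.

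The core step is to fix a player $i$ and a deviation $a_i \in L_i(a_{-i}^*)$ and to show
\begin{equation*}
u_i(a_i, a_{-i}^*) - \rho \psi_i(a_i, a_{-i}^*) \leq u_i(a^*) - \rho \psi_i(a^*).
\end{equation*}
Since $a^*$ solves Problem~\eqref{Prob: Max_over_min}, $a_i^*$ minimizes $\psi_i(\cdot, a_{-i}^*)$ on $L_i(a_{-i}^*)$, so $\psi_i(a_i, a_{-i}^*) \geq \psi_i(a^*)$. I split according to the alternative in \eqref{condition_on_psi}, reading $\psi$ there as $\psi_i$.

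In the first case, $\psi_i(a_i,a_{-i}^*) - \psi_i(a^*) \geq \alpha_\psi |a_i - a_i^*|$; the absolute value can be dropped because of the minimality just noted. Combining this with \eqref{condition_on_u} gives
\begin{equation*}
u_i(a_i,a_{-i}^*) - u_i(a^*) \leq \alpha_u |a_i - a_i^*| \leq \frac{\alpha_u}{\alpha_\psi}\big(\psi_i(a_i,a_{-i}^*) - \psi_i(a^*)\big) \leq \rho\big(\psi_i(a_i,a_{-i}^*) - \psi_i(a^*)\big)
\end{equation*}
for $\rho \geq \bar\rho$, which is exactly the desired inequality.

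In the second case, $\psi_i(a_i, a_{-i}^*) = \psi_i(a^*)$. Then $a_i$ also lies in $\argmin_{b_i \in L_i(a_{-i}^*)} \psi_i(b_i, a_{-i}^*)$. Since $a_i^*$ maximizes $u_i(\cdot, a_{-i}^*)$ over this argmin set, we get $u_i(a_i,a_{-i}^*) \leq u_i(a^*)$. The penalty terms cancel, so the inequality holds for every $\rho>0$.

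The main obstacle is the first case, in particular justifying the sign of $\psi_i(a_i,a_{-i}^*) - \psi_i(a^*)$, which is where the hypothesis that $a_i^*$ minimizes the penalty is essential. It also has to be checked that the constants $\alpha_u, \alpha_\psi$ are uniform over all players, so that a single threshold $\bar\rho$ works for everyone simultaneously. Feasibility of $a^*$ for the coupled constraint ($a^* \in \mathcal L$) is given, so $a^*$ is admissible as a GNE of $G_\rho$.
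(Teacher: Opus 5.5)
Your proof is correct and follows the paper's argument: you show that $a^*$ is a GNE of $G_\rho$ for every $\rho$ beyond $\alpha_u/\alpha_\psi$, then take the eventually constant sequence. The differences are cosmetic. You split on the two alternatives of \eqref{condition_on_psi}, where the paper uses three cases on the signs of the $\psi_i$- and $u_i$-differences. You also explicitly fill the finitely many indices with $\rho_n \le \alpha_u/\alpha_\psi$ with arbitrary GNEs, which the paper leaves implicit.
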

\begin{proof}
    Let $a^*$ be a solution to Problem~\ref{Prob: Max_over_min} for all players. 
    We show that $a^*$ is a GNE of every $\psi$-penalized game $G_{\rho_n}$ with $\rho_n > \alpha_u/\alpha_\psi$.
    Let $i \in \mathcal N$ and $a_i \in L_i(a_{-i}^*)$. 
    By definition of $a^*$ we have $\psi_i(a_i, a_{-i}^*) - \psi_i(a^*) \geq 0$.\\
    \textit{First case.} Let $\psi_i(a_i, a_{-i}^*) = \psi_i(a^*)$. Then, again by definition of $a^*$, we also have $u_i(a^*) \geq u_i(a_i, a_{-i}^*)$.
    Implying,
    \begin{equation*}
        u_i(a^*) - \rho_n \psi_i(a^*) \geq u_i(a_i, a_{-i}^*) - \rho_n \psi_i(a_i, a_{-i}^*) 
    \end{equation*}
    %%%%%%%%%%%%%%%%%%%%%%%%%%%%
    \textit{Second case.} Let $\psi_i(a_i, a_{-i}^*) >  \psi_i(a^*)$ and $ u_i(a^*) \geq u_i(a_i, a_{-i}^*)$. For all $\rho_n>0$ we have 
    \begin{equation*}
        u_i(a^*) - \rho_n \psi_i(a^*) > u_i(a_i, a_{-i}^*) - \rho_n \psi_i(a_i, a_{-i}^*)
    \end{equation*}
    %%%%%%%%%%%%%%%%%%%%%%%%%%%%
   \textit{Third case.} Let $\psi_i(a_i, a_{-i}^*) > \psi_i(a^*)$ and $u_i(a^*) < u_i(a_i, a_{-i}^*)$. For all $\rho_n > \alpha_u/\alpha_\psi$ we have
    \begin{align*}
        \rho_n \Big( \psi_i(a_i, a_{-i}^*) - \psi_i(a^*) \Big) 
        &>\frac{\alpha_u}{\alpha_\psi} \Big( \psi_i(a_i, a_{-i}^*) - \psi_i(a^*) \Big) \\
        &\geq \alpha_u |a_i - a_i^*| \\
        &\geq |u_i(a_i, a_{-i}^*) - u_i(a^*)| \\
        &= u_i(a_i, a_{-i}^*) - u_i(a^*)
    \end{align*}
    where the second and third inequalities follow from \eqref{condition_on_psi} and \eqref{condition_on_u}.
    Thus, we conclude that $a^*$ is a GNE for all $\psi$-penalized games $G_{\rho_n}$ with $\rho_n > \alpha_u/\alpha_\psi$. 
    We can thus construct a sequence $(a_n)_{n \in \mathbb N}$ converging to $a^*$ such that $a_n$ is a GNE of $G_{\rho_n}$ by setting $a_n = a^*$ for all $n$ such that $\rho_n > \alpha_u/\alpha_\psi$. 
\end{proof}
\begin{remark}
\label{Rem; conditions_on_psi_u}
    Condition~\eqref{condition_on_psi} guarantees that the penalty is either constant or grows at least linearly. In the counterexample in~\ref{Sec: counterexample}, $\psi_1(\cdot, 0)$ does not satisfy Condition~\eqref{condition_on_psi} at $0$. In fact, $\psi_1(\cdot, 0)$ is differentiable at $0$ which minimizes $\psi_1(\cdot, 0)$ implying $\psi_1'(0,0) =0$ and preventing the existence of $\alpha_\psi >0$.
\end{remark}
Moreover, we can show similarly that if for some $a^*\in\mathcal L$ the conditions hold then, for any pair $\rho_n, \rho_m > \alpha_u/\alpha_\psi$, $a^*$ is a GNE of $G_{\rho_n}$ if and only if $a^*$ is a GNE of $G_{\rho_m}$. If the conditions are satisfied at all strategy profiles $a \in \mathcal A$, then for all $\rho_n, \rho_m > \alpha_u/\alpha_\psi$ the set of GNEs is the same for $\psi$-penalized games $G_{\rho_n}$ and $G_{\rho_m}$ and the set of $\psi$-penalized solutions is the closure of the set of GNEs of  $G_{\rho_n}$ for $\rho_n > \alpha_u/\alpha_\psi$.  

%% file: Example.tex
\section{Example}
\label{Sec: DD_with_constraints}
We show now an illustrative example inspired by Divide-the-Dollar that allows an exact penalization.
Therefore, we consider a 2-player game $G^{DD}$ in which each player $i \in \{1, 2\}$ makes a demand $\mathcal A_i = [0,1]$. Their utility $u_i: \mathcal A \to \mathbb R$ is given by their respective demand, i.e. $u_i(a_i, a_{-i}) = a_i$. Both players must share a resource whose limited amount is modeled as a coupled constraint
\begin{equation*}
    \mathcal L = \Big\{ (a_1, a_2) \in [0,1]^2 ~|~ a_1 + a_2 \leq 1 \Big \}
\end{equation*}
We assume that both players know the amount of the resource and, following the standard way to consider coupled constraints in generalized games, that player $i$'s feasible set of strategies is given by the constraint correspondences $L_i$. Besides, we assume that each player has additional personal constraints. First, each player $i$ requires at least $\delta_i > 0$. Second, a player means to leave unused an amount $\epsilon_i > 0$ of the resource. The individual constraint correspondence $C_i: [0,1] \to [0,1]$ is thus given by $C_i(a_{-i}) := [\delta_i,1 - \epsilon_i - a_{-i} ]$.
We consider the penalty function $\psi_i: [0,1]^2 \to [0, \infty)$ defined by
\begin{equation}
\label{eq:penalty_distance}
    \psi_i(a) = d\big(a_i, \Gamma(C_i)\big) := \inf_{b_i \in  \Gamma(C_i)} | a_i - b_i |
\end{equation}
In this example, we consider the penalty function \eqref{eq:penalty_distance} as a natural metric to measure the distance of a point to a non-empty compact set (see the smoothing in \cite{Nash1953demand}). We remark that it is sufficient to consider the GNEs of $G_\rho$ with $\rho > 1/\sqrt{2}$, each point satisfying the conditions of Theorem~\ref{Th: exact_penalization} with $\alpha_u = 1$ and $\alpha_\psi = \sqrt{2}$.

In the following, we will distinguish two cases. First, we assume that both players intend to leave unused the same amount of the resource, i.e. $\epsilon_1 = \epsilon_2$. The individual constraints have a nonempty intersection in which all profiles are feasible for both players (see Fig.~\ref{fig: Solutions_example}a). 
Hence, the set of $\psi$-penalized solutions is given by the common upper right boundary of the individual constraints and coincides with the set of GNEs of game $G^{DD}$ (see Fig.~\ref{fig: Solutions_example}a). 

Second, we assume the players differ in how much they leave unused, i.e. $\epsilon_1 \neq \epsilon_2$. Without loss of generality, let $\epsilon_1 < \epsilon_2$. 
If the $\delta_1$ and $\delta_2$ are small enough, i.e. $\delta_1 + \delta_2 \leq 1 - \epsilon_1$, there is again a nonempty intersection of the individual constraints (see Fig.~\ref{fig: Solutions_example}b). All profiles in this intersection are feasible for both players. However, no profile in this intersection is maximizing the utility for player $1$ who can always benefit by deviating to a higher feasible demand. As a consequence, game $G^{DD}$ has no GNE. Nevertheless, the game has the $\psi$-penalized solution $(0.6, 0.3)$.
If the requirements $\delta_1$ and $\delta_2$ are large, i.e. $\delta_1 + \delta_2 > 1 - \epsilon_1$, the individual constraints are disjoint (see Fig.~\ref{fig: Solutions_example}c) and, by definition, there is no GNE of game $G^{DD}$. There is an infinity of $\psi$-penalized solutions on the upper right boundary of the coupled constraint which are all infeasible for both players.
\\
\begin{figure}[ht]
\centering
\begin{minipage}[b]{0.3\columnwidth}
    \centering
    \includegraphics[width=\linewidth]{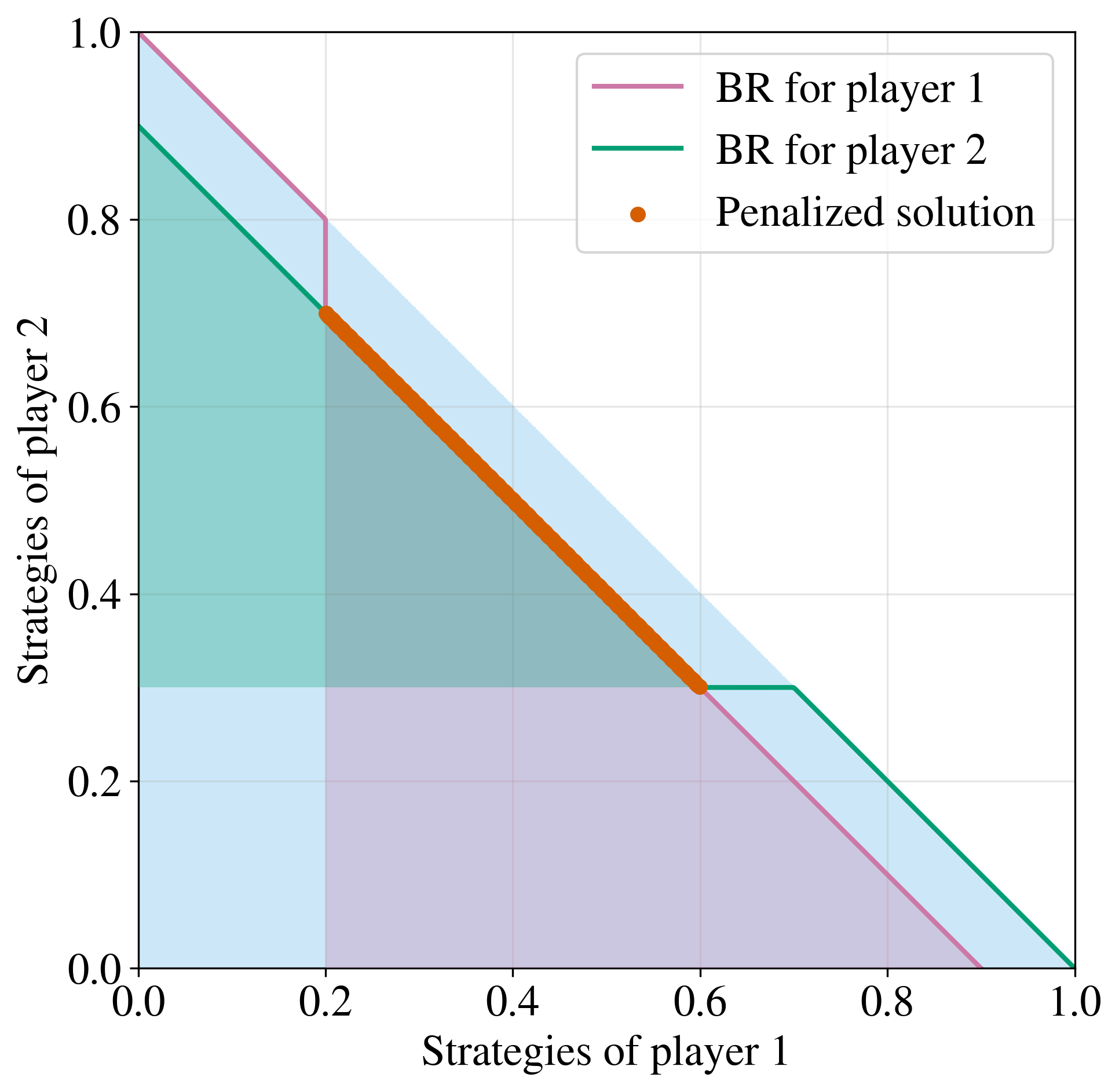}
    (a) $\delta_1 = 0.2$, $\delta_2 = 0.3$, $\epsilon_1 = \epsilon_2 = 0.1$
\end{minipage}
\begin{minipage}[b]{0.3\columnwidth}
    \centering
    \includegraphics[width=\linewidth]{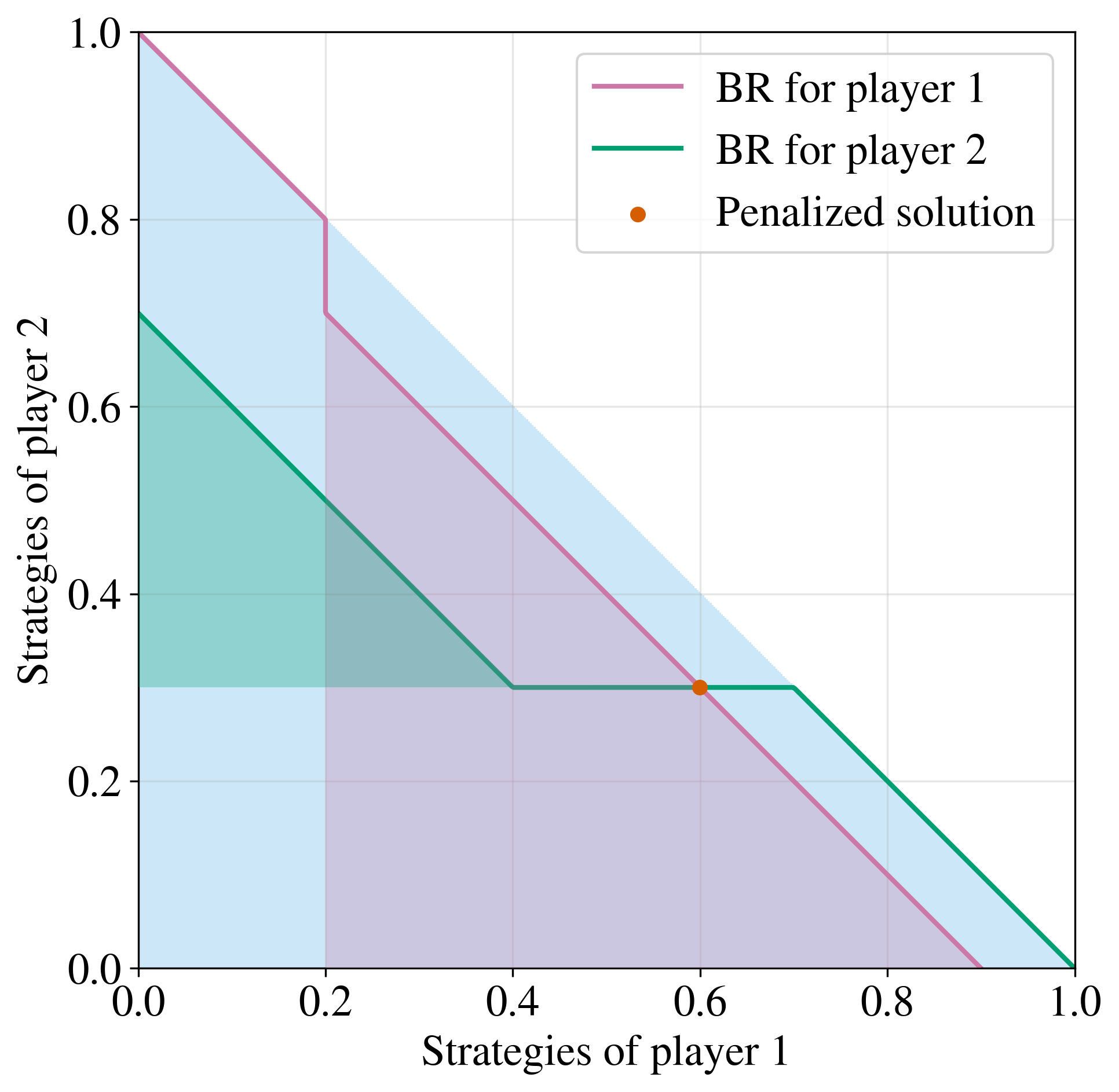}
    (b) $\delta_1 = 0.2$, $\delta_2 = 0.3$, $\epsilon_1 = 0.1$, $\epsilon_2 = 0.3$
\end{minipage}
\begin{minipage}[b]{0.3\columnwidth}
    \centering
    \includegraphics[width=\linewidth]{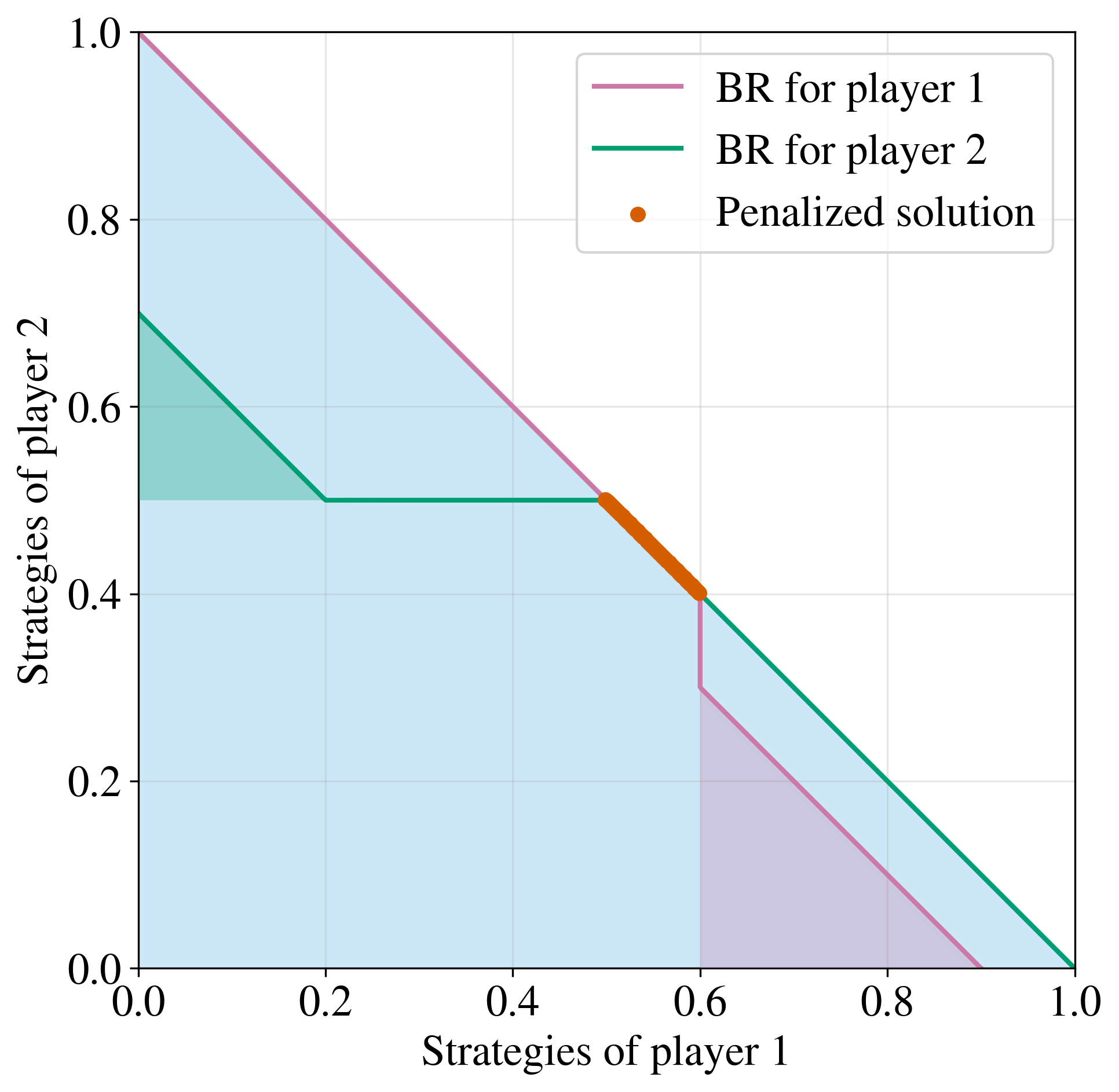}
    (c) $\delta_1 = 0.6$, $\delta_2 = 0.5$, $\epsilon_1 = 0.1$, $\epsilon_2 = 0.3$
\end{minipage}

\caption{$\psi$-penalized solutions (red) and best response correspondences for player 1 (purple) and player 2 (green) under the coupled constraint $\mathcal L$ (blue) and their respective individual constraints for game $G^\text{DD}$.}
\label{fig: Solutions_example}
\end{figure}

%% file: Relation_to_others.tex
\section{Relation to other concepts}
\label{Sec: relation}
In the following section, we investigate how our concept relates to the GNE and Nash bargaining solution. While the GNE is the standard solution concept for generalized games, Nash's solution to the bargaining problem shows how the players may attain agreement by smoothing the original game. 

\subsection{Generalized Nash equilibrium}
First, we remark that if the individual constraints are induced by the coupled constraint $\mathcal L$, i.e. for all players $i \in \mathcal N$ $C_i := L_i$, the penalty function is null by definition. In this case, the sequence of $\psi$-penalized games is constant and identically equal to $G$ implying that the set of $\psi$-penalized solution is the closure of the set of GNEs of $G$. \\
Considering the general case, the following corollary shows that every $\psi$-penalized solution of game $G$ that is feasible for all players is a GNE.
\begin{corollary}
\label{Cor: penalized_is_GNE_less_cond}
    Let $a^*$ be a $\psi$-penalized solution of game $G$. If for all players $i \in \mathcal N$, $C_i$ is l.s.c. and $a_i^* \in C_i(a_{-i}^*)$, then $a^*$ is a GNE of game $G$.
\end{corollary}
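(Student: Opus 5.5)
Since $a^*$ is feasible for every player, being a GNE of $G$ means exactly that for every $i \in \mathcal N$ and every $b_i \in C_i(a_{-i}^*)$ we have $u_i(a_i^*, a_{-i}^*) \geq u_i(b_i, a_{-i}^*)$. I would not go through Theorem~\ref{Th: penalized_solves_problem}, whose assumption (iii) is not available here. Instead I would repeat the second case of its proof directly, which only uses lower semi-continuity of $C_i$ at $a_{-i}^*$.

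Fix a player $i$. By Definition~\ref{Def: penalized_solution} there is a subsequence $(a_{n_k}^*)_{k\in\mathbb N}$ of GNEs of $G_{\rho_{n_k}}$ converging to $a^*$. Take any $b_i \in C_i(a_{-i}^*)$. Since $C_i$ is l.s.c. at $a_{-i}^*$, Lemma~\ref{Lem: Hogan} gives $b_{n_k,i} \in C_i(a_{n_k,-i}^*)$ with $b_{n_k,i} \to b_i$. Here I can use $C_i$ itself, so Lemma~\ref{Lem: compactification_is_lsc} is not needed.

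Because $C_i(a_{n_k,-i}^*) \subseteq L_i(a_{n_k,-i}^*)$, each $b_{n_k,i}$ is an admissible deviation in $G_{\rho_{n_k}}$. By \eqref{equ: conditions_on_psi} its penalty is $\psi_i(b_{n_k,i}, a_{n_k,-i}^*) = 0$. The GNE property together with $\psi_i \geq 0$ then gives
\begin{equation*}
u_i(b_{n_k,i}, a_{n_k,-i}^*) \leq u_i(a_{n_k,i}^*, a_{n_k,-i}^*) - \rho_{n_k}\psi_i(a_{n_k}^*) \leq u_i(a_{n_k}^*).
\end{equation*}
Passing to the limit using continuity of $u_i$ yields $u_i(b_i, a_{-i}^*) \leq u_i(a^*)$.

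Finally, $a_i^* \in C_i(a_{-i}^*)$ holds for every $i$ by hypothesis, so $a^*$ is feasible for all players, and with the inequality above it is a GNE of $G$. The only delicate point is the approximation step. It requires $C_i$ to be nonempty near $a_{-i}^*$ along the sequence, and this is exactly what lower semi-continuity at $a_{-i}^*$ guarantees, since $C_i(a_{-i}^*) \ni a_i^*$ is nonempty. No further obstacle is expected.
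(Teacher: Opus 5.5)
Your proof is correct and is essentially the paper's own argument. The paper simply refers to the second case of the proof of Theorem~\ref{Th: penalized_solves_problem}: approximate each feasible deviation $b_i$ using lower semi-continuity of $C_i$, use that $b_{n_k,i}$ has zero penalty and apply the GNE property of $a_{n_k}^*$ in $G_{\rho_{n_k}}$, then pass to the limit. Your streamlining is legitimate: you work with $C_i$ instead of $\overline{C_i}$ and take $a_i^*\in C_i(a_{-i}^*)$ from the hypothesis. This avoids Lemma~\ref{Lem: compactification_is_lsc} and Lemma~\ref{Lem: solution_is_in_argmin_psi}, and with the latter the lower semi-continuity of $L_i$, which the corollary does not assume.
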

\begin{proof}
    See second case in proof of Theorem \ref{Th: penalized_solves_problem}.
\end{proof}
The individual feasibility for all players is necessary by definition of a GNE but the lower semicontinuity of the individual constraints is not.
In fact, as an example, consider a variation of Example \ref{fig: Solutions_example}a obtained by replacing the individual constraint correspondence $C_2$ by $C_2':\mathcal{A}_{1}\rightarrow \mathcal{P}(\mathcal{A}_2)$ such that
\begin{align*}
    C_2'(a_1) = 
        \begin{cases}
            [0.3, 0.6] &a_1 \in [0,0.1)\\
            C_2(a_1) &a_1 \in [0,1, 1]
        \end{cases} 
\end{align*}
This correspondence is not l.s.c. at $0.1$ but the set of $\psi$-penalized solutions of game $G$ is the same as in Example \ref{fig: Solutions_example} and continues to coincide with the set of GNEs of game $G$ (see Fig.~\ref{fig: GNE_counterexamples}a). The conditions stated in Corollary~\ref{Cor: penalized_is_GNE_less_cond} are thus sufficient but not necessary. \\
Nevertheless, assuming only that a $\psi$-penalized solution of $G$ is feasible for all players is not sufficient to guarantee that it is a GNE of $G$. As an example, consider a second variation of Example \ref{fig: Solutions_example}a with $C_2'':\mathcal{A}_{1}\rightarrow \mathcal{P}(\mathcal{A}_2)$ such that
\begin{align*}
    C_2''(a_1) = \begin{cases}
        [0.3, 0.95-a_1] &a_1 \in [0,0.4] \\
        C_2(a_1)  &a_1 \in (0.4, 1]
    \end{cases}
\end{align*}
This correspondence is not l.s.c. at $0.4$. Nevertheless, if the penalty parameter $\rho > 1/\sqrt{2}$, it can be shown that the set of GNEs of the $\psi$-penalized games $G_\rho$ does not depend on $\rho$ and is given by
\begin{equation*}
    \Big\{ (a, 0.9 - a) ~|~ a  \in (0.4, 0.6] \Big\}
\end{equation*}
Therefore, there is a sequence of GNEs in the associated $\psi$-penalized games converging to $(0.4, 0.5)$ which is thus a $\psi$-penalized solution of game $G$. The profile $(0.4, 0.5)$ is feasible for both players but player $2$ has a feasible benefiting deviation to $0.55$ implying that $(0.4, 0.5)$ is not a GNE.
\begin{figure}[ht]
\centering
    \begin{minipage}[b]{0.4\columnwidth}
        \centering
        \includegraphics[width=\linewidth]{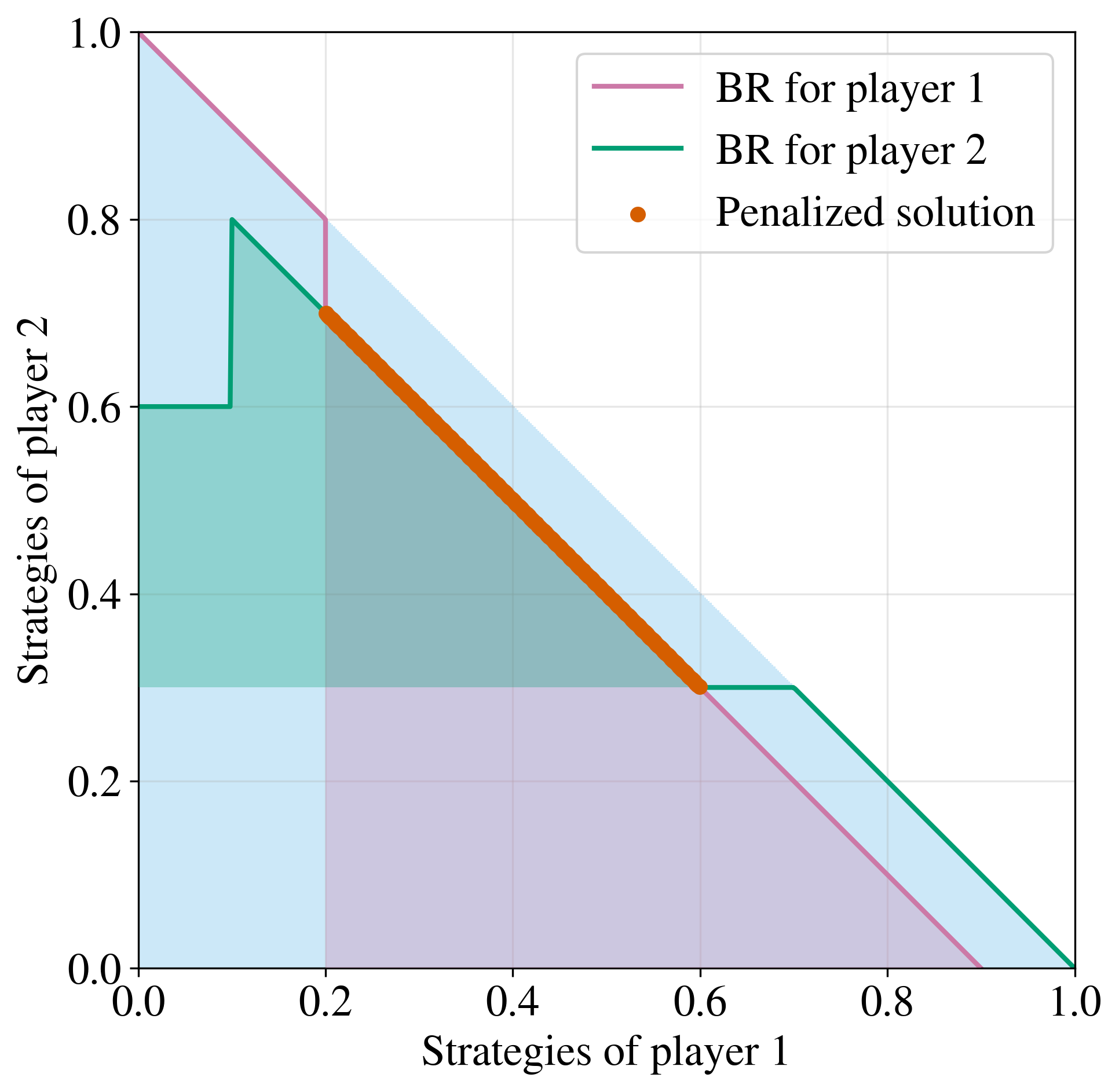}
        (a) with $C_2'$
    \end{minipage}
    %\hfill
    \hspace{1cm}
    \begin{minipage}[b]{0.4\columnwidth}
        \centering
        \includegraphics[width = \linewidth]{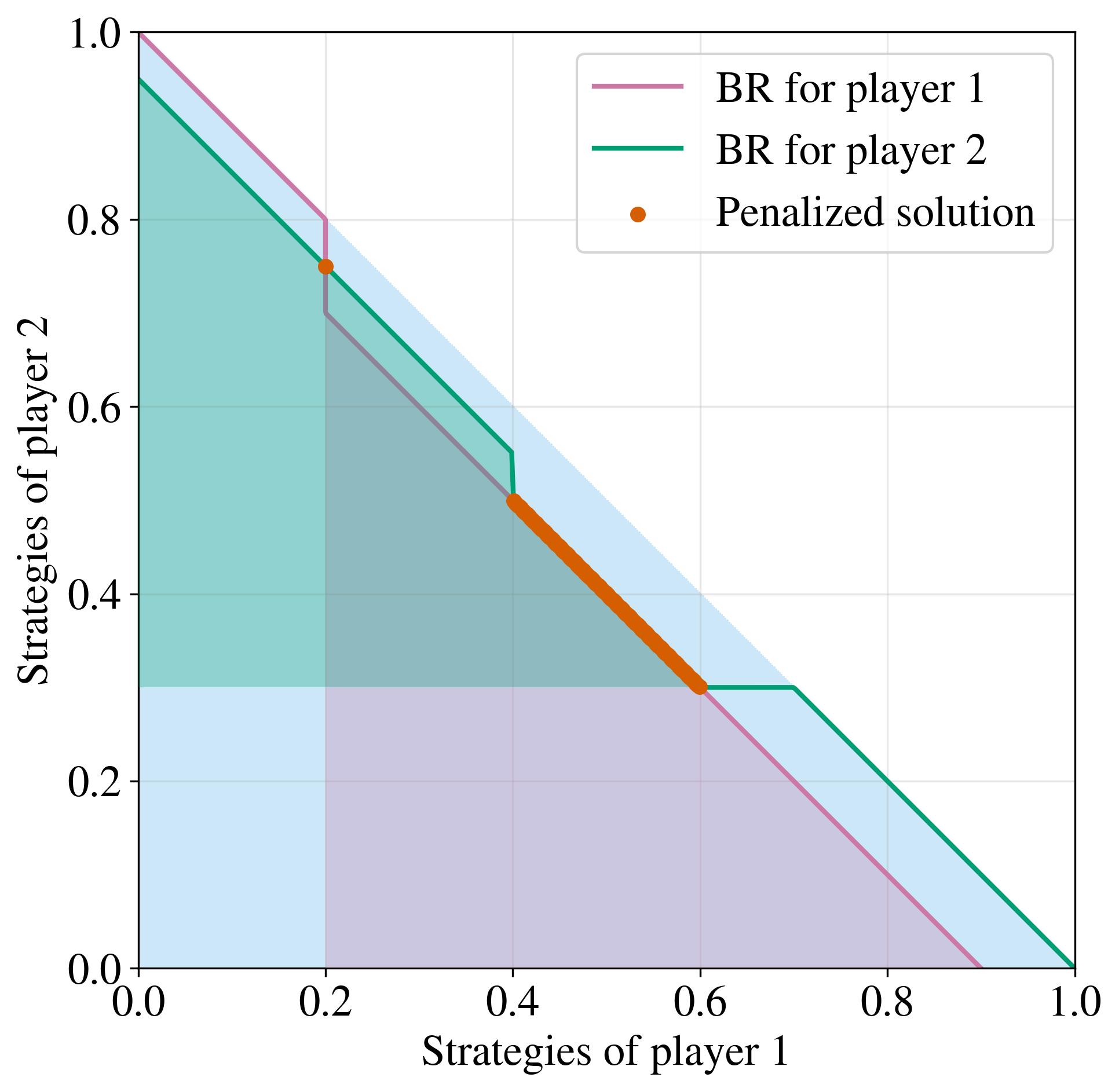}
        (b) with $C_2''$
    \end{minipage}
    \caption{Example~\ref{fig: Solutions_example}a with the modified individual constraint correspondence for player $2$.}
    \label{fig: GNE_counterexamples}
\end{figure}
Finally, the following corollary shows that in case of an exact penalization every GNE of game $G$ is also a $\psi$-penalized solution.
\begin{corollary}
    \label{Cor: GNE_is_penalized}
        Assume that there are constants $\alpha_u, \alpha_\psi > 0$ such that for all players $i \in \mathcal N$ and strategies $a_i \in L_i(a_{-i}^*)$ conditions~\eqref{condition_on_u} and~\eqref{condition_on_psi} hold (see Theorem~\ref{Th: exact_penalization}). 
        If $a^* \in \mathcal A$ is a GNE of game $G$ then $a^*$ is a $\psi$-penalized solution. 
\end{corollary}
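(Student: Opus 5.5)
The plan is to reduce the statement to Theorem~\ref{Th: exact_penalization}. That theorem already shows that any $a^* \in \mathcal L$ satisfying conditions~\eqref{condition_on_u} and~\eqref{condition_on_psi} and solving Problem~\eqref{Prob: Max_over_min} for all players is a $\psi$-penalized solution. So it suffices to check two things: that a GNE $a^*$ of $G$ lies in $\mathcal L$, and that for every player $i \in \mathcal N$ the strategy $a_i^*$ solves Problem~\eqref{Prob: Max_over_min} at $a_{-i}^*$.

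\textbf{Membership in $\mathcal L$.} Since $a^*$ is a GNE of $G$, we have $a_i^* \in C_i(a_{-i}^*)$ for every $i$. By the standing assumption $C_i(a_{-i}^*) \subseteq L_i(a_{-i}^*)$, this gives $a_i^* \in L_i(a_{-i}^*)$, that is, $a^* \in \mathcal L$.

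\textbf{Penalty minimization.} Because $a_i^* \in C_i(a_{-i}^*) \subseteq \overline{C_i(a_{-i}^*)}$, property~\eqref{equ: conditions_on_psi} yields $\psi_i(a^*) = 0$. As $\psi_i \geq 0$, $a_i^*$ minimizes $\psi_i(\cdot, a_{-i}^*)$ over $L_i(a_{-i}^*)$. Moreover, the minimal value is $0$, so by~\eqref{equ: conditions_on_psi} again
\begin{equation*}
    \argmin_{b_i \in L_i(a_{-i}^*)} \psi_i(b_i, a_{-i}^*) = \overline{C_i(a_{-i}^*)} \cap L_i(a_{-i}^*).
\end{equation*}

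\textbf{Utility maximization.} It remains to show $u_i(a^*) \geq u_i(b_i, a_{-i}^*)$ for all $b_i$ in the set above. The GNE property gives this inequality for all $b_i \in C_i(a_{-i}^*)$. For $b_i$ in the closure, I take a sequence $c_k \in C_i(a_{-i}^*)$ with $c_k \to b_i$. Each $c_k$ satisfies $u_i(a^*) \geq u_i(c_k, a_{-i}^*)$, and continuity of $u_i(\cdot, a_{-i}^*)$ lets me pass to the limit and obtain the inequality for $b_i$. Hence $a_i^*$ solves Problem~\eqref{Prob: Max_over_min} for every $i$, and Theorem~\ref{Th: exact_penalization} applies directly.

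The only step that needs care is the passage from $C_i(a_{-i}^*)$ to its closure, since~\eqref{equ: conditions_on_psi} is stated with $\overline{C_i}$. Continuity of $u_i$ handles it.

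Alternatively, one can bypass Problem~\eqref{Prob: Max_over_min} and prove directly that $a^*$ is a GNE of each $G_{\rho_n}$ with $\rho_n > \alpha_u/\alpha_\psi$. Fix $a_i \in L_i(a_{-i}^*)$. If $\psi_i(a_i,a_{-i}^*) = 0$, the GNE property together with the closure argument above gives the comparison. Otherwise the linear growth~\eqref{condition_on_psi} and the Lipschitz bound~\eqref{condition_on_u} give
\begin{equation*}
    \rho_n \psi_i(a_i, a_{-i}^*) > \alpha_u |a_i - a_i^*| \geq u_i(a_i, a_{-i}^*) - u_i(a^*).
\end{equation*}
Both routes end with the constant sequence $a_n = a^*$ for large $n$, so $a^*$ is an accumulation point of GNEs of $(G_{\rho_n})_{n \in \mathbb N}$, hence a $\psi$-penalized solution.
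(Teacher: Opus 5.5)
Your proposal is correct and takes the paper's route: the paper likewise observes that a GNE of $G$ solves Problem~\eqref{Prob: Max_over_min} for every player and then invokes Theorem~\ref{Th: exact_penalization}. Your two explicit checks, that $a^* \in \mathcal L$ and that maximality over $C_i(a_{-i}^*)$ extends to $\overline{C_i(a_{-i}^*)}$ by continuity of $u_i$, fill in what the paper compresses into ``by definition of a GNE''.
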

\begin{proof}
    We note that $a^*$ is, by definition of a GNE, a solution to Problem~\eqref{Prob: Max_over_min} and thus, by Theorem~\ref{Th: exact_penalization}, a $\psi$-penalized solution.
\end{proof}
Finally, we remark that the conditions stated in Corollary~\ref{Cor: GNE_is_penalized} are sufficient but not necessary. Indeed, in the example presented in Section~\ref{Sec: counterexample} the conditions are not satisfied, but the GNE $(0.5, 0.5)$ is a $\psi$-penalized solution. Nevertheless, without additional conditions, there may be a GNE that is not a $\psi$-penalized solution of game $G$, e.g. the GNE $(0,0)$ in the example in Section~\ref{Sec: counterexample}.

\subsection{Nash bargaining problem}
\label{sec: Nash_bargaining}
Nash considered the \textit{bargaining problem} and proposed a solution in \cite{Nash1950bargaining, Nash1953demand}. The game models a negotiation between two players in the following way. 
First, each player chooses a threat. Second, knowing the threats, each chooses her demand of utility. If there is a strategy profile offering both players their respective demands, they cooperate and play the corresponding strategy profile. Otherwise, they play their threats. 
To solve this game and select an equilibrium, Nash introduced a sequence of games defined by a smoothing function and studied the limit of corresponding equilibria, but did not provide an explicit example of smoothing function. 
van Damme specified this approach in \cite{vanDamme1991} modeling the bargaining problem as two player game $B = (\{1,2\}, (\mathbb R_{\geq 0})_{i \in \{1,2\}}, (v_i)_{i \in \{1,2\}}, \mathcal S)$ where $\mathcal S \subset \mathbb R^2$ denotes the compact and convex set of payoffs that players can obtain by cooperating and the utility $v_i: \mathbb R_{\geq0}^2 \rightarrow \mathbb R$ is given by
\begin{equation}\label{eq;def_utility_unbounded_bargaining}
    v_i(a_i, a_{-i}) = 
    \begin{cases}
        a_i &a \in \mathcal S \\
        0 &\text{otherwise}
    \end{cases}
\end{equation}
Furthermore, he considered a set of perturbations $H = \bigcup_{\epsilon >0} H^\epsilon$ where $H^\epsilon$ is a set of functions such that
\begin{align*}
    H^{\epsilon}
    =
    \{
        h: \mathbb R_{\geq 0}^2 \to (0,1] 
        \, | \,& 
        \text{$h$ continuous}, 
        \forall a \in\mathcal S, h(a) = 1 
        \text{ and }
        \\ 
        &\forall a\not\in\mathcal S, 
        d(a, \mathcal S) > \epsilon
        \Rightarrow
        \max\{h(a), a_1 a_2 h(a) \} < \epsilon
    \}
\end{align*}
and defined the \textit{perturbed game} $B(h)= (\{1,2\}, (\mathbb R_{\geq 0})_{i \in \{1,2\}}, (v^h_i)_{i \in \{1,2\}})$ where $v^h_i:\mathbb R_{\geq 0}^2\rightarrow \mathbb R$ is the utility of player $i$ is such that, for any $(a_1,a_2)\in\mathbb R_{\geq 0}^2$
\begin{equation*}
    v^h_i(a_1, a_2) = a_i h(a_1, a_2)
\end{equation*}
Thus, $v^h_i$ describes the product of the utility function and the function $h$ that gradually penalizes strategies that are not in the set $\mathcal S$. 
An \textit{H-essential equilibrium} is defined as (see Definition 7.5.3 in \cite{vanDamme1991})
\begin{definition}
    A Nash equilibrium $a^*$ of $B$ is an \textit{H-essential equilibrium} if associated with every sequence $(h^\epsilon)_{\epsilon \uparrow 0}$ with $h^\epsilon \in H^\epsilon$ there is a sequence $(a^\epsilon)_{\epsilon \uparrow 0}$ such that $a^\epsilon$ is a Nash equilibrium of $B(h^\epsilon)$ and $a^\epsilon \to a^*$ as $\epsilon \to 0$.
\end{definition}
It is shown that the game B has a unique H-essential equilibrium (Theorem 7.5.5 in \cite{vanDamme1991}) that is the solution proposed by Nash in \cite{Nash1953demand} (Corollary 7.5.6 in \cite{vanDamme1991}). The H-essential equilibrium thus gives an equivalent characterization of the solution to the bargaining problem. \\
In the following, we clarify the relation between $H$-essential equilibria and the $\psi$-penalized solutions for games with compact and convex sets of strategies. 
Therefore, we assume that in the bargaining problem the set of strategies of each player is restricted to a compact and convex subset $\mathcal A_1, \mathcal A_2 \subset \mathbb R_{\geq 0}^2$ such that $\mathcal S \subseteq \mathcal A=\mathcal A_1\times \mathcal A_2$. We define the restricted bargaining game $B_c = (\{1,2\}, (\mathcal A_i)_{i \in \{1,2\}}, (v_i^c)_{i \in \{1,2\}}, \mathcal S)$ such that $\mathcal A_i\subset \mathbb R_{\geq 0}$ is  the compact and convex set of strategies of player $i$, $\mathcal S \subseteq \mathcal A$ and $v_i^c: \mathcal A \rightarrow \mathbb R$ such that
\begin{equation*}
    v_i^c(a_i, a_{-i}) = 
    \begin{cases}
        a_i &a \in \mathcal S \\
        0 &\text{otherwise}
    \end{cases}
\end{equation*}
Furthermore, we consider the set $H_c$ of restricted perturbations such that
\begin{align*}
    H_c
    &:= \bigcup_{\epsilon>0} H_c^\epsilon \\
    &:= \bigcup_{\epsilon>0}
    \left\{
        \begin{aligned}
        h:\mathcal A \to (0,1]
        \ \big|\;&
        h \text{ is continuous},\quad
        \forall a\in\mathcal S,\ h(a)=1, \\[-0.25em]
        &
        \forall a\notin\mathcal S,\ 
        d(a,\mathcal S)>\epsilon
        \Rightarrow
        \max\{h(a),\,a_1a_2h(a)\}<\epsilon
        \end{aligned}
    \right\}.
\end{align*}
and define for $h \in H_c$ the function $v_i^{c,h}: \mathcal A \to \mathbb R$ such that for any $(a_1, a_2) \in \mathcal A$
\begin{equation*}
    v_i^{c,h}(a_1, a_2) = a_i h(a_1, a_2)
\end{equation*}
Then, we denote the restricted perturbed game $B_c(h) = (\{1,2\}, (\mathcal A_i)_{i \in \{1,2\}}, (v_i^{c,h})_{i \in \{1,2\}}, \mathcal S)$. The concept of an $H$-essential equilibrium can be also be defined for the bargaining game with compact and convex set of strategies $B_c$
\begin{definition}
     A Nash equilibrium $a^*$ of $B_c$ is an \textit{$H_c$-essential equilibrium} if associated with every sequence $(h^\epsilon)_{\epsilon \uparrow 0}$ with $h^\epsilon \in H_c^\epsilon$ there is a sequence $(a^\epsilon)_{\epsilon \uparrow 0}$ such that $a^\epsilon$ is a Nash equilibrium of $B_c(h^\epsilon)$ and $a^\epsilon \to a^*$ as $\epsilon \to 0$.
\end{definition}
To study the relation between $H_c$-essential equilibria and $\psi$-penalized solutions, we introduce the game $\tilde{B}_c = (\{1,2\}, (\mathcal A_i)_{i \in \{1,2\}}, (\log \circ \, v_i^c)_{i \in \{1,2\}}, \mathcal S)$. This game is equal to $B_c$ except that, for any player  $i\in\mathcal N$, the utility function $v^c_i$ is replaced by $\log\circ~ v^c_i$. The set $\mathcal S$ describes the feasible strategy profiles and induces the individual constraint correspondences $(S_i)_{i \in \{1,2\}}$, as discussed previously in Section \ref{Sec: setting}. We write, by abuse of notation, also $\tilde{B}_c$ for the equivalent game in which the constraint set $\mathcal S$ is represented by the individual constraint correspondences $(S_i)_{i \in \{1,2\}}$. The following theorem shows that an $H_c$-essential equilibrium of game $B_c$ is a $\psi$-penalized solution of $\tilde{B}_c$.
\begin{theorem}
\label{Th: h_essential_penalized_log}
    If $a^* \in \mathcal A$ is an $H_c$-essential equilibrium of game $B_c$, then $a^*$ is a $\psi$-penalized solution of game $\tilde{B_c}$ for all penalty functions $\psi = (\psi_1,\psi_2)$ such that $\psi_1 = \psi_2$. 
\end{theorem}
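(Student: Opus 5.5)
The plan is to identify the $\psi$-penalized games of $\tilde B_c$ with restricted perturbed games $B_c(h)$ for a suitable $h$, and then read off the conclusion directly from the definition of an $H_c$-essential equilibrium. Let $\psi:=\psi_1=\psi_2$. Here $\psi$ is continuous, nonnegative, and vanishes exactly on $\overline{\mathcal S}=\mathcal S$ by \eqref{equ: conditions_on_psi}, since $\mathcal S$ is compact. In $\tilde B_c$ the coupled constraint is the whole product $\mathcal A$, so the penalized game $\tilde B_{c,\rho}$ gives player $i$ the utility $\log a_i-\rho\psi(a)$ on $\mathcal A$; on $\mathcal S$ this is the relaxation of $\log\circ v_i^c$.

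\textbf{Step 1 (monotone transformation).} For $\rho>0$ define $h_\rho(a):=\exp(-\rho\psi(a))$. Then
\[
\exp\big(\log a_i-\rho\psi(a)\big)=a_i h_\rho(a)=v_i^{c,h_\rho}(a).
\]
Because $\exp$ is strictly increasing, a profile is a Nash equilibrium of $\tilde B_{c,\rho}$ if and only if it is a Nash equilibrium of $B_c(h_\rho)$. The boundary $a_i=0$ needs a separate check. There the log-utility equals $-\infty$ and the product utility equals $0$, and the order-preserving correspondence still holds in the extended reals. Both games therefore have the same best responses.

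\textbf{Step 2 ($h_\rho$ is an admissible perturbation).} The function $h_\rho$ is continuous, takes values in $(0,1]$, and equals $1$ on $\mathcal S$. Fix $\epsilon>0$ and set $m(\epsilon):=\min\{\psi(a) : a\in\mathcal A,\ d(a,\mathcal S)\ge\epsilon\}$. This minimum is attained by compactness and is positive, since $\psi>0$ off $\mathcal S$; if the set is empty the condition is vacuous. Let $M:=\max\{1,\max_{\mathcal A}a_1a_2\}$. For $\rho>\rho_\epsilon:=\log(M/\epsilon)/m(\epsilon)$, every $a$ with $d(a,\mathcal S)>\epsilon$ satisfies
\[
\max\{h_\rho(a),\,a_1a_2h_\rho(a)\}\le M e^{-\rho m(\epsilon)}<\epsilon ,
\]
so $h_\rho\in H_c^\epsilon$. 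Membership is monotone in $\epsilon$: $h\in H_c^\epsilon$ implies $h\in H_c^{\epsilon'}$ for all $\epsilon'\ge\epsilon$. Consequently $\epsilon_n:=\inf\{\epsilon: h_{\rho_n}\in H_c^\epsilon\}+1/n$ satisfies $h_{\rho_n}\in H_c^{\epsilon_n}$. Moreover $\epsilon_n\to0$, because for each fixed $\epsilon$ we eventually have $\rho_n>\rho_\epsilon$.

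\textbf{Step 3 (apply essentiality).} I would extract a subsequence along which $\epsilon_{n_k}$ strictly decreases to $0$, and complete $(h_{\rho_{n_k}})$ to a family $(h^\epsilon)_{\epsilon\downarrow0}$ with $h^\epsilon\in H_c^\epsilon$ by choosing arbitrary elements at the other values of $\epsilon$. By $H_c$-essentiality there are Nash equilibria $a^{\epsilon}$ of $B_c(h^\epsilon)$ converging to $a^*$. In particular $a^{\epsilon_{n_k}}\to a^*$, and by Step 1 these are GNEs of $\tilde B_{c,\rho_{n_k}}$. For the remaining indices $n$ I pick any equilibrium of $B_c(h_{\rho_n})$, equivalently of $\tilde B_{c,\rho_n}$. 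The resulting sequence of GNEs of $\tilde B_{c,\rho_n}$ has $a^*$ as an accumulation point, so $a^*$ is a $\psi$-penalized solution in the sense of Definition~\ref{Def: penalized_solution}.

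\textbf{Main obstacle.} I expect the main obstacle to be the bookkeeping between the two indexings, $\rho_n\to\infty$ versus $\epsilon\downarrow0$, together with guaranteeing that the fill-in equilibria for the other indices exist. If Theorem~\ref{Th: Existence} does not apply directly because of the $-\infty$ values of the log, I would instead show that $B_c(h_{\rho_n})$ has an equilibrium. Each player's payoff $a_i h_{\rho_n}(a)$ is continuous, and by Step 1 it is quasi-concave in $a_i$, being the exponential of $\log a_i-\rho\psi$, which is concave in $a_i$. Existence then follows by the Debreu--Glicksberg--Fan theorem. The subtlety about $a_i=0$ in Step 1 is the other point needing care.
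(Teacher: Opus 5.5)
Your proposal is correct and follows essentially the same route as the paper. Both rewrite the penalized log-utility via the monotone map $\exp$ as $a_i\exp(-\rho\psi(a))$, and both use compactness (a positive lower bound for $\psi$ on $\{a : d(a,\mathcal S)\ge\epsilon\}$ and boundedness of $a_1a_2$) to get $\exp(-\rho\psi)\in H_c^{\epsilon}$ for large $\rho$; both then invoke $H_c$-essentiality along a matched subsequence of $(\rho_n)$. Your bookkeeping is more careful than the paper's, which only produces GNEs along that subsequence and does not address the $a_i=0$ boundary or the existence of equilibria at the remaining indices, both of which you handle.
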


\begin{proof}
Let $\psi = (\psi_1, \psi_2)$ be a penalty function with $\psi_1 = \psi_2$ and $(\rho_n)_{n \in \mathbb N}$ a sequence of penalty parameters. Moreover, let $a^*$ be an $H_c$-essential equilibrium of $B_c$. To show that $a^*$ is a $\psi$-penalized solution of game $\tilde{B_c}$, we must find a sequence $(a^*_m)_{m \in \mathbb N}$ such that for any $m \in \mathbb N$, $a^*_m$ is a GNE of the $\psi$-penalized games $\tilde{B}_{c, \rho_{n_m}}$ (defined as in Section \ref{sec: penalized_game}) and $a^*_m \to a^*$ as $m \to \infty$. \\
The utility of player $i$ in $\psi$-penalized game $\tilde{B}_{c, \rho_{n_m}}$ is given by
\begin{equation*}
    \log(a_i) - \rho_{n_m} \psi_i(a_1, a_2)
\end{equation*}
By monotonicity of the exponential function, a strategy profile $a^*$ is thus a GNE of $\tilde{B}_{c, \rho_{n_m}}$ if and only if for all players $i \in \{1,2\}$ it holds
\begin{equation*}
    a_i^* \in \argmax_{a_i \in \mathcal A_i} \log(a_i) - \rho_{n_m} \psi_i(a_i, a_{-i}^*) = \argmax_{a_i \in \mathcal A_i} a_i \exp\big( - \rho_{n_m} \psi_i(a_i, a_{-i}^*) \big)
\end{equation*}
We remark that the constraints in $\tilde{B}_{c, \rho_{n_m}}$ are given by the strategy sets $\mathcal A$. The previous characterization is thus equivalent to such of a NE. \\
Let $(\epsilon_m)_{m \in \mathbb N} \subset \mathbb R_{>0}$ be any sequence such that $\epsilon_m \to 0$ as $m \to \infty$. By definition of an $H_c$-essential equilibrium, for any sequence $(h_m)_{m \in \mathbb N}$ with $h_m \in H_c^{\epsilon_m}$ there is a sequence $(a^*_m)_{m \in \mathbb N}$ such that $a^*_m \to a^*$ as $m \to \infty$ and $a^*_m$ is a NE of $B_c(h_m)$, i.e. for any player $i \in \{1,2\}$
\begin{equation*}
    a_{m,i}^* \in  \argmax_{a_i \in \mathcal A_i} a_i h_m(a_i, a_{m,-i}^*)
\end{equation*}
It is thus sufficient to show that there is a $M \in \mathbb N$ such that for all $m \geq M$ it holds $p_m := \exp\big( - \rho_{n_m} \psi_i(a_i, a_{-i}^*) \big) \in H_c^{\epsilon_m}$. \\
First, we remark that $\psi: \mathcal A \to \mathbb R_{\geq 0}$ is assumed to be continuous and thus bounded on the compact set $\mathcal A$. Because $\rho_m \to \infty$ as $m \to \infty$, there is thus a $M \in \mathbb N$ such that for all $m \geq M$ the function $p_m: \mathcal A \to (0,1]$ is well-defined and continuous.\\
Second, by definition of the penalty function $\psi$, we remark that for all $a \in \mathcal S$, $p_m(a) = 1$. \\
It is thus left to show that for all $a \notin \mathcal S$
\begin{equation*}
    d(a, \mathcal S) > \epsilon \Rightarrow \max\{p_m(a), a_1 a_2 p_m(a)\} < \epsilon_m
\end{equation*}
Therefore, we remark that the set $E_m := \{a \in \mathcal A ~|~ d(a,S) \geq \epsilon_m\}$ is compact. Moreover, $\psi$ is supposed to be continuous and strictly positive in points $(a_1, a_2) \in E_m$, we can thus deduce that there is a constant $c_m > 0$ such that for all $a \in E_m$
\begin{equation*}
    c_m \leq \psi(a)
\end{equation*}
Besides, the function $f: \mathcal A \to \mathbb R$ defined by $(a_1, a_2) \to a_1 a_2$ is also continuous on the compact $E_m$ and thus bounded. Hence, there is another constant $d_m > 0$ such that for all $a \in E_m$
\begin{equation*}
    a_1 a_2 \leq d_m
\end{equation*}
We deduce that there is a $M_m \in \mathbb N$ such that for all $n \geq M_m$ and $a \in \mathcal A$ with $d(a, \mathcal S) > \epsilon_m$
\begin{equation*}
        \max\big(\exp(-\rho_{n_m} \psi(a)), a_1 a_2 \exp(-\rho_{n_m} \psi(a)) \big) < \epsilon_m
\end{equation*}

We conclude that $p_m \in H_c^{\epsilon_m}$. 
\end{proof}

Moreover, in the following theorem we use the invariance of the set of $\psi$-penalized solutions under composition of the penalty or utility with monotonously increasing functions to show that $a^*$ is also a $\psi$-penalized solution of $B_c$. With a slight abuse of notation, we denote again by $B_c$ the game with individual constraint correspondences $(S_i)_{i \in \{1,2\}}$ instead of the constraint set $\mathcal S$.
\begin{theorem}
\label{Th: h_essential_is_penalized}
    If for both players $i \in \{1,2\}$ the following conditions hold
    \begin{itemize}
        \item[(i)] for all $a_{-i} \in \mathcal A_{-i}$, $S_i$ is l.s.c. in $a_{-i}$ or $S_i(a_{-i})$ is a singleton,
        \item[(ii)] for all $a_{-i} \in \mathcal A_{-i}$, $\psi(\cdot, a_{-i})$ is strictly convex on $\mathcal A_i \setminus S_i(a_{-i})$,
    \end{itemize}
    and $a^* \in \mathcal A$ is an $H_c$-essential equilibrium of $B_c$ such that 
    \begin{itemize}
        \item[(iii)] there is a constant $\alpha_\psi > 0$ such that for all players $i = 1, 2$ and strategies $a_i \in \mathcal A_i$,
        \begin{equation*}
            \begin{aligned}
                | \psi_i(a_i, a_{-i}^*) - \psi_i(a^*) | \geq \alpha_{\psi} |a_i - a_i^*| 
                \quad \text{or} \quad \psi(a^*) = \psi(a_i, a_{-i}^*)
            \end{aligned}
        \end{equation*}
    \end{itemize}
    Then $a^*$ is a $\psi$-penalized equilibrium of $B_c$ for all penalty functions $\psi = (\psi_1,\psi_2)$ such that $\psi_1 = \psi_2$. 
\end{theorem}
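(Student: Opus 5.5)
The plan is to pass from $\tilde B_c$ to $B_c$ through Problem~\eqref{Prob: Max_over_min}: characterize $a^*$ via Theorem~\ref{Th: penalized_solves_problem} applied to $\tilde B_c$, observe that this characterization does not see the monotone transformation $\log$, and then conclude for $B_c$ with the exact-penalization result Theorem~\ref{Th: exact_penalization}.

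\textbf{Step 1.} By Theorem~\ref{Th: h_essential_penalized_log}, $a^*$ is a $\psi$-penalized solution of $\tilde B_c$ for every $\psi$ with $\psi_1=\psi_2$.

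\textbf{Step 2.} Apply Theorem~\ref{Th: penalized_solves_problem} to $\tilde B_c$, whose coupled constraint is $\mathcal L=\mathcal A$.
\begin{itemize}
\item Hypothesis (i) holds because $L_i(a_{-i})=\mathcal A_i$ is a constant correspondence, hence l.s.c.
\item Hypotheses (ii) and (iii) are exactly assumptions (i) and (ii) of the present statement, with $S_i$ in place of $C_i$.
\end{itemize}
One technical point must be checked first. Theorem~\ref{Th: penalized_solves_problem} only uses continuity of $u_i$ through $u_i(b_{n_k,i},a^*_{n_k,-i})\to u_i(b_i,a^*_{-i})$ at profiles in $\overline{S_i}$. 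There $\log\circ v^c_i=\log a_i$ is continuous, at least when $a_i>0$. Points with $a_i=0$, where $\log$ equals $-\infty$, I would treat separately or exclude by the essentiality of $a^*$.

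The conclusion of Step 2 is that $a_i^*$ maximizes $\log v_i^c(\cdot,a_{-i}^*)$ over $\argmin_{b_i\in\mathcal A_i}\psi_i(b_i,a_{-i}^*)$.

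\textbf{Step 3.} Transfer this to $B_c$. Since $\log$ is strictly increasing, the maximizers of $\log v^c_i$ and of $v^c_i$ over any fixed set coincide. The constraint set $\argmin\psi_i(\cdot,a_{-i}^*)$ does not involve the utility at all. Hence $a^*$ solves Problem~\eqref{Prob: Max_over_min} for both players in $B_c$ with the same $\psi$. On $\overline{S_i(a_{-i}^*)}$ the utility $v^c_i$ equals $a_i$; if the argmin lies outside $\mathcal S$ because $S_i(a^*_{-i})=\emptyset$, it is a singleton by strict convexity, as in the first case of the proof of Theorem~\ref{Th: penalized_solves_problem}, and there is nothing to compare.

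\textbf{Step 4.} Apply Theorem~\ref{Th: exact_penalization} to $B_c$ at $a^*$.
\begin{itemize}
\item Condition~\eqref{condition_on_psi} is assumption (iii).
\item Condition~\eqref{condition_on_u} requires $v^c_i(\cdot,a^*_{-i})$ to be Lipschitz at $a_i^*$. Where $v^c_i=a_i$ this holds with $\alpha_u=1$. The difficulty is the jump of $v^c_i$ to $0$ outside $\mathcal S$.
\end{itemize}

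I expect the verification in Step 4 to be the main obstacle. My approach is not to use~\eqref{condition_on_u} literally but to re-run the third case of the proof of Theorem~\ref{Th: exact_penalization}. In that case a deviation $a_i$ with higher penalty must be shown unprofitable. Outside $\mathcal S$ the deviation gains no utility, since $v^c_i=0\le v^c_i(a^*)$ when $a^*\in\mathcal S$, so such deviations fall into the second case. Inside $\mathcal S$ we have $\psi=0$, so any deviation there lies in the equal-penalty first case. The only remaining situation is $a^*\notin\mathcal S$, where both utilities vanish near $a^*$ and the Lipschitz bound is trivial. With these cases covered, $a^*$ is a GNE of every $B_{c,\rho_n}$ with $\rho_n>1/\alpha_\psi$, and hence a $\psi$-penalized solution of $B_c$.

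The concavity and continuity assumptions on utilities made in Section~\ref{Sec: setting} fail for $v^c_i$. So I would also remark that the proofs of Theorems~\ref{Th: penalized_solves_problem} and~\ref{Th: exact_penalization} use these assumptions only in the ways checked above. In particular, existence of penalized GNEs is never needed, because in Step 4 the sequence is the constant one $a_n=a^*$.
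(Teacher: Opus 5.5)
Your Steps 1--3 match the paper's proof: Theorem~\ref{Th: h_essential_penalized_log}, then Theorem~\ref{Th: penalized_solves_problem} on $\tilde B_c$ with $\mathcal L=\mathcal A$, then the monotone transfer. The paper also finishes with Theorem~\ref{Th: exact_penalization}.

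\textbf{The gap is in Step 4.} Your case analysis proves the conclusion for penalized games with utilities $v^c_i-\rho\psi_i$, where $v^c_i\equiv 0$ off $\mathcal S$. Those are not the games the theorem is about.
\begin{itemize}
\item Once $B_c$ is rewritten with the individual constraints $S_i$, the discontinuity of $v^c_i$ is carried by $S_i$. The utility is then $a_i$ on all of $\mathcal A$, continuous and concave as Section~\ref{Sec: setting} requires.
\item This is how your own Step 1 is proved. Theorem~\ref{Th: h_essential_penalized_log} takes the penalized utility of $\tilde B_c$ to be $\log(a_i)-\rho\psi_i(a)$ on all of $\mathcal A$. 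Only this makes the identification with the perturbed game $B_c(h)$, $h=e^{-\rho\psi}$, work.
\item Consistently, the penalized games of $B_c$ have utilities $a_i-\rho\psi_i(a)$. The paper's one-line claim that the utility is $1$-Lipschitz must refer to this $a_i$, not to $v^c_i$ with its jump.
\end{itemize}
For these games, your key claim that a deviation leaving $\mathcal S$ ``gains no utility'' is false. A deviation $a_i>a_i^*$ with $(a_i,a^*_{-i})\notin\mathcal S$ raises the utility by $a_i-a_i^*>0$. This is exactly the third case in the proof of Theorem~\ref{Th: exact_penalization}, the only place where (iii) and $\rho_n>1/\alpha_\psi$ are used. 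A symptom of the misreading is that your argument never uses (iii). Under your reading, every NE of $B_c$ lying in $\mathcal S$ would already be a NE of every $B_{c,\rho}$, which would make the detour through $\tilde B_c$ and hypotheses (i)--(iii) pointless.

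\textbf{The fix is immediate and is what the paper does.} With $u_i(a)=a_i$, condition~\eqref{condition_on_u} holds with $\alpha_u=1$ at every point, and (iii) is condition~\eqref{condition_on_psi}. So Theorem~\ref{Th: exact_penalization} applies verbatim, and $a^*$ is a GNE of $B_{c,\rho_n}$ for all $\rho_n>1/\alpha_\psi$. For the finitely many smaller $\rho_n$ you do need GNEs, which exist by the argument of Theorem~\ref{Th: Existence}. So your closing remark that existence is never needed also rests on the misreading.

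Under the correct reading, your continuity worries reduce to $\log$ at $a_i=0$. There, your claim that Theorem~\ref{Th: penalized_solves_problem} uses continuity of $u_i$ only at profiles in $\overline{S_i}$ is inaccurate. Through Lemma~\ref{Lem: solution_is_in_argmin_psi}, it also needs a uniform bound on $u_i$-differences between the penalized equilibria and arbitrary, possibly infeasible, deviations. For $\log a_i$ this still works: only an upper bound on $\log a^*_{n_k,i}$ is required, and by continuity of $\psi_i$ the strictly better deviation can be chosen with $a_i>0$.
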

\begin{proof}
    By Theorem \ref{Th: h_essential_penalized_log}, an $H_c$-essential equilibrium $a^*$ is a $\psi$-penalized solution of the game $\tilde B_c$. \\
    In the following, we apply Theorems~\ref{Th: penalized_solves_problem} and~\ref{Th: exact_penalization}. Hereby, the coupled constraint $\mathcal L$ is defined as the strategy sets $\mathcal A$. Consequently, the coupled constraint correspondences are taking the constant value $\mathcal A_i$ and are thus u.s.c. and l.s.c. The individual constraint correspondences $S_i$ are assumed to satisfy Assumption (i). Moreover, the utility $\log \circ~ v_i$ is concave and $\psi$ assumed to satisfy Assumption (ii). We apply thus Theorem~\ref{Th: penalized_solves_problem} and deduce that, for every player $i \in \{1,2\}$, the strategy $a_i^*$ solves
    \begin{equation*}
        \begin{aligned}
            &\max_{a_i} \log(v_i(a_i, a_{-i}^*)) \\
            &\text{s.t. } a_i \in \argmin_{b_i \in S_i(a_{-i}^*)} \psi_i(b_i, a_{-i}^*)
        \end{aligned}
    \end{equation*}
    Then, by the monotonicity of the exponential function, it solves also the following problem
    \begin{equation*}
       \begin{aligned}
           &\max_{a_i} v_i(a_i, a_{-i}^*) \\
            &\text{s.t. } a_i \in \argmin_{b_i \in S_i(a_{-i}^*)} \psi_i(b_i, a_{-i}^*)
       \end{aligned}
    \end{equation*}
    Besides, for all players $i =1,2$ and strategies $a_{-i}$ the utility $v_i(\cdot, a_{-i})$ is $1$-Lipschitz continuous. Finally, we deduce, by Theorem~\ref{Th: exact_penalization}, that $a^*$ is a $\psi$-penalized solution of $B_c$.
\end{proof}
We note that a $\psi$-penalized equilibrium is not necessarily an $H_c$-essential equilibrium. To see that, it is sufficient to consider $\mathcal A_1 = \mathcal A_2 = [0,2]$, $\mathcal S = \{ (a_1, a_2) \in [0,2]^2 ~|~ a_1 + a_2 \leq 1 \}$ and $\psi$ chosen as in \eqref{eq:penalty_distance}. This game has infinitely many GNEs, i.e. all strategy profiles $(a_1, a_2) \in [0,2]^2$ with $a_1 + a_2 = 1$, and all of them are also $\psi$-penalized solutions (Corollary \ref{Cor: GNE_is_penalized}). However, by adapting the proof of Theorem 7.5.5 in \cite{vanDamme1991} to this game, we know that the $H_c$-essential equilibrium is unique and there must thus be infinitely many $\psi$-penalized solutions that are not a $H_c$-essential equilibrium. \\
Finally, by adapting the proof of Theorem 7.5.4 in \cite{vanDamme1991} to compact and convex strategy sets $\mathcal A_1$ and $\mathcal A_2$, it can be shown that the solution of the bargaining problem is an $H_c$-essential equilibrium of $B$ and thus a $\psi$-penalized solution of $\tilde B_c$ for all penalty functions $\psi$ (Theorem \ref{Th: h_essential_penalized_log}). Under additional conditions on the regularity of $\mathcal S$ and on the structure of $\psi$, the solution of the bargaining problem is a $\psi$-penalized solution of $B_c$ (Theorem \ref{Th: h_essential_is_penalized}). \\
To conclude, it remains an open question how the $H_c$-essential equilibrium can be extended to general games and how it relates to the $\psi$-penalized solution in such settings.

%% file: Conclusion_short.tex
\section{Future Works}
\label{Sec: Future_works}
In the future, we aim to generalize our main results and find weaker conditions under which solutions to Problem~\eqref{Prob: Max_over_min} are $\psi$-penalized solutions, and vice versa. In particular, it remains an open question to determine conditions such that strategy profiles solving Problem~\eqref{Prob: Max_over_min} for all players are $\psi$-penalized solutions without implying an exact penalization, or allowing also differentiable penalty functions (as in Section~\ref{Sec: sufficient_and_exact_penalization}). 
Furthermore, characterizing the set of $\psi$-penalized solutions (e.g., topological structure, robustness, fairness) with respect to penalty function $\psi$ is an important area for further research. Besides, further works should study the conditions for uniqueness of the $\psi$-penalized solution as well as refinements.

%% file: Appendix_short.tex
\section*{Appendix}
\label{Sec: Appendix}
In this work, we use the notations in Table \ref{tab: notations}. First, we recall the definitions of l.s.c. and u.s.c. (see p.109 in \cite{berge1963}).
\input{Notations}
First, we recall the definition of l.s.c. (see p.109 in \cite{berge1963}).
\begin{definition}
\label{Def: lsc}
  Let $C: X \to Y$ be a correspondence between topological spaces $X$ and $Y$. Let $x_0$ be a point in $X$. $C$ is said to be \textit{lower semi-continuous} (l.s.c.) at $x_0$ if for each open set $U \subset Y$ with $C(x_0) \cap U \neq \emptyset$ there is a neighborhood $N(x_0) \subset X$ such that for all $x \in N(x_0)$ it holds $C(x) \cap U \neq \emptyset$.
  We say that $C$ is l.s.c. in $X$ if it is l.s.c. at $x$ for all $x \in X$.
\end{definition}
We remark that by Theorem 1 on p.109 in \cite{berge1963} and Corollary 1.1 in \cite{hogan1973}, the following equivalence holds.
\begin{lemma}
\label{Lem: Hogan}
    A correspondence $C: X \to Y$ between topological spaces $X$ and $Y$ is l.s.c. at $x_0 \in X$ if and only if for all sequences $(x_n)_{n \in \mathbb N}$ converging to $x_0$ and $y_0 \in C(x_0)$ there is $N \in \mathbb N$ and a sequence $(y_n)_{n \in \mathbb N}$ converging to $y_0$ such that $y_n \in C(x_n)$ for all $n \geq N$. 
\end{lemma}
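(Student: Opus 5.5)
The two directions should be proved separately. Throughout, $X$ and $Y$ are taken to be first countable; the natural choice is metric spaces, which covers every use in the paper since all spaces there are subsets of Euclidean spaces. Without some such countability assumption the sequential characterization can fail for general topological spaces. I would therefore state the lemma, as used, for metric $X$ and $Y$, or for first countable $X$ together with metrizable $Y$.

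\emph{Necessity.} Assume $C$ is l.s.c. at $x_0$, and let $x_n \to x_0$ and $y_0 \in C(x_0)$. For each $k \geq 1$, let $B_k$ be the open ball of radius $1/k$ around $y_0$. Since $y_0 \in C(x_0) \cap B_k$, Definition~\ref{Def: lsc} gives a neighborhood $N_k(x_0)$ such that $C(x) \cap B_k \neq \emptyset$ for all $x \in N_k(x_0)$. Because $x_n \to x_0$, there is an index $N_k$ with $x_n \in N_k(x_0)$ for all $n \geq N_k$, and I choose these indices strictly increasing in $k$. Set $N := N_1$. For $N_k \leq n < N_{k+1}$, pick $y_n \in C(x_n) \cap B_k$; for $n < N$, let $y_n$ be arbitrary. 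Then $d(y_n, y_0) < 1/k$ whenever $n \geq N_k$, so $y_n \to y_0$ and $y_n \in C(x_n)$ for all $n \geq N$.

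\emph{Sufficiency.} I argue by contraposition. Suppose $C$ is not l.s.c. at $x_0$. Then there is an open set $U \subset Y$ with $C(x_0) \cap U \neq \emptyset$ such that every neighborhood of $x_0$ contains some $x$ with $C(x) \cap U = \emptyset$. Let $(V_n)_{n \in \mathbb N}$ be a decreasing countable neighborhood base at $x_0$, which exists by first countability. Choose $x_n \in V_n$ with $C(x_n) \cap U = \emptyset$; then $x_n \to x_0$. Fix $y_0 \in C(x_0) \cap U$. If a sequence $y_n \to y_0$ satisfied $y_n \in C(x_n)$ for all $n \geq N$, then, since $U$ is an open neighborhood of $y_0$, we would have $y_n \in U$ for all large $n$. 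This gives $y_n \in C(x_n) \cap U = \emptyset$, a contradiction. Hence the sequential condition fails, which proves sufficiency.

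The main obstacle is not either of these computations but the hypotheses on the spaces. Sufficiency genuinely needs a countable neighborhood base at $x_0$, and necessity needs one at $y_0$; this is why the statement should be restricted to metric (or first countable) spaces. Beyond that, the only delicate step is the diagonal bookkeeping in necessity, namely making the indices $N_k$ strictly increasing so that each $y_n$ is well defined.
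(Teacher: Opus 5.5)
Your proof is correct. It takes a different route from the paper only in that the paper gives no argument of its own: it cites Theorem~1 on p.~109 of Berge and Corollary~1.1 of Hogan, whose setting is Euclidean.

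Your direct argument is self-contained. For necessity you make a diagonal choice of $y_n\in C(x_n)\cap B_k$ along strictly increasing indices $N_k$; for sufficiency you pick $x_n\in V_n$ with $C(x_n)\cap U=\emptyset$ from a countable base. More importantly, your version states a hypothesis that the paper's formulation (``between topological spaces'') omits. You are right that the equivalence fails in that generality:
\begin{itemize}
\item Take $X=\omega_1+1$ in the order topology, $x_0=\omega_1$, $C(x)=\emptyset$ for $x<\omega_1$ and $C(\omega_1)=\{y_0\}$. Every sequence converging to $\omega_1$ is eventually constant, so the sequential condition holds, yet $C$ is not l.s.c.\ at $\omega_1$. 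This breaks sufficiency.
\item Symmetrically, take $X=\{0\}\cup\{1/n : n\geq 1\}$, $Y=\omega_1+1$, $C(0)=\{\omega_1\}$ and $C(1/n)=[0,\omega_1)$. Then $C$ is l.s.c.\ at $0$, but no sequence $y_n\in C(1/n)$ converges to $\omega_1$. This breaks necessity.
\end{itemize}
The paper applies the lemma only to subsets of Euclidean space, in Lemma~\ref{Lem: solution_is_in_argmin_psi} and Theorem~\ref{Th: penalized_solves_problem}. So your restriction to first countable $X$ and metrizable $Y$ costs nothing. The paper's citation buys brevity at the price of an overstated statement; your proof buys a correct statement with the needed countability made explicit.
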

Now, we prove a preliminary result on l.s.c. correspondences. 
\begin{lemma}
\label{Lem: compactification_is_lsc}
    Let $C: X \to 2^Y$ be a correspondence between two topological spaces $X$ and $Y$. If $C$ is l.s.c. in $x_0$, the correspondence $\overline{C}: X \to Y$ defined by $\overline{C}(x) = \overline{C(x)}$ is also l.s.c. in $x_0$.
\end{lemma}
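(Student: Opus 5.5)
The plan is to verify the open-set definition of lower semi-continuity (Definition~\ref{Def: lsc}) directly for $\overline{C}$ at $x_0$. The only fact needed is the elementary topological one: for an open set $U \subset Y$ and any set $A \subset Y$, $\overline{A} \cap U \neq \emptyset$ holds if and only if $A \cap U \neq \emptyset$. This single equivalence does all the work.

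First I establish that equivalence. If $A \cap U \neq \emptyset$, then $\overline{A} \cap U \neq \emptyset$ because $A \subseteq \overline{A}$. Conversely, suppose $y \in \overline{A} \cap U$. Then $U$ is an open neighborhood of a closure point of $A$, so $U$ meets $A$ by the characterization of the closure in a general topological space. No metric or sequential structure is needed.

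Next I verify the definition. Let $U \subset Y$ be open with $\overline{C}(x_0) \cap U \neq \emptyset$. By the equivalence, $C(x_0) \cap U \neq \emptyset$. Since $C$ is l.s.c. at $x_0$, there is a neighborhood $N(x_0)$ such that $C(x) \cap U \neq \emptyset$ for all $x \in N(x_0)$. Since $C(x) \subseteq \overline{C(x)}$, it follows that $\overline{C}(x) \cap U \neq \emptyset$ for all such $x$. This is exactly lower semi-continuity of $\overline{C}$ at $x_0$.

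I do not expect a real obstacle. The only point needing care is to use the open-set Definition~\ref{Def: lsc} rather than the sequential characterization of Lemma~\ref{Lem: Hogan}, since the latter would need first-countability assumptions not available for arbitrary topological spaces $X$ and $Y$. Empty values are handled automatically: if $C(x_0) = \emptyset$, then $\overline{C}(x_0) = \emptyset$ and the condition is vacuous.
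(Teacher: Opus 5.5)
Your proof is correct and follows essentially the same argument as the paper. Both pass from $\overline{C}(x_0)\cap U\neq\emptyset$ to $C(x_0)\cap U\neq\emptyset$ using the characterization of closure points via open neighborhoods, then apply lower semi-continuity of $C$ at $x_0$, and conclude with the inclusion $C(x)\subseteq\overline{C(x)}$.
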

\begin{proof}
    Let $G$ be an open set such that $\overline{C}(x_0) \cap G \neq \emptyset$. Then there is some $x \in \overline{C}(x_0) \cap G$. Since $x \in G$ and $G$ open, $G$ is an open neighborhood of $x$. By the definition of the closure of a set, we have $C(x_0) \cap G \neq \emptyset$ (see p.5-6 in \cite{steen1970counterexamples} and Theorem 3 in \cite{berge1963}). Furthermore, since $C$ is l.s.c. in $x_0$, there is a neighborhood $U(x_0)$ such that for all $x \in U(x_0)$, it holds $C(x) \cap G \neq \emptyset$ and thus also $\overline{C}(x) \cap G \neq \emptyset$. We conclude that $\overline{C}$ is l.s.c. in $x_0$.
\end{proof}

%% file: Notations.tex
\begin{table}[ht]
	\begin{minipage}{\columnwidth}
		\begin{center}
			\begin{tabular}{ll}
				\toprule
                    $G$ & original game $G = \big(\mathcal{N}, (\mathcal{A}_i)_{i \in \mathcal{N}} , (u_i)_{i \in \mathcal{N}}, (C_i)_{i \in \mathcal N} \big)$  \\
                    $\mathcal{N}$ & set of players \\
                    $N$ & cardinality of set of players \\
                    $\mathcal A_i$ & set of strategies of player $i \in \mathcal N$ \\
                    $\mathcal A$ & product $\prod_{i \in \mathcal N} \mathcal A_i$ \\
                    $\mathcal A_{-i}$ & product $\prod_{i \in \mathcal N, i \neq j} \mathcal A_j$ \\
                    $u_i$ & utility of player $i \in \mathcal N$ \\
                    $C_i$ & individual constraint correspondence of player $i \in \mathcal N$ \\
                    $\mathcal L$ & coupled constraint \\
                    $L_i$ & coupled constraint correspondence of player $i \in \mathcal N$\\
                    $G_\rho$ & $\psi$-penalized game $G_\rho = \big(\mathcal{N}, (\mathcal{A}_i)_{i \in \mathcal{N}} , (u_i - \rho \psi_i)_{i \in \mathcal{N}}, $\\
                    &$(L_i)_{i \in \mathcal N} \big)$  \\
                    $\rho$ & penalty parameter \\
                    $\psi_i$ & penalty function of player $i \in \mathcal N$\\
                    $G^{ce}$ & game serving as a counterexample in Section \ref{Sec: counterexample}\\
                    $G^{DD}$ & game serving as an example in Section \ref{Sec: DD_with_constraints}\\
                    $d(x, A)$ & distance of a point to a set s.t. $d(x,A) := \inf_{a \in A} |x - a|$ \\
                    $\mathcal P(\cdot)$ & power set \\
                    %$GNE(\cdot)$ & set of GNEs of a game \\
                    $\Gamma(\cdot)$ & graph of a function or correspondence \\
                    $\bar{A}$ & closure of a set $A \subset \mathbb R$ \\
				\bottomrule
			\end{tabular}
		\end{center}
		\bigskip\centering
	\end{minipage}
    \caption{Notations}
    \label{tab: notations}
\end{table}